\documentclass[sigconf]{acmart}
\usepackage[utf8]{inputenc} % allow utf-8 input
\usepackage[T1]{fontenc}    % use 8-bit T1 fonts
\usepackage{booktabs}       % professional-quality tables
\usepackage{nicefrac}       % compact symbols for 1/2, etc.
\usepackage{microtype}      % microtypography
\usepackage[dvipsnames]{xcolor}         % colors
\usepackage{amsthm}
\usepackage{amsmath}

\usepackage{wrapfig}

\usepackage{amsthm}
\usepackage{hyperref}

\usepackage{glossaries}
\usepackage{url}
\usepackage{siunitx}
\usepackage{graphicx}
\usepackage{tcolorbox} 

\usepackage{algorithm}
\usepackage{algorithmic}
\usepackage{multicol}
\usepackage{multirow}
\usepackage{pifont}

\usepackage{caption}
\usepackage{subcaption}

\usepackage{enumitem}

\DeclareMathOperator*{\argmin}{arg\,min}

\newcommand{\pw}{\mathbf{w}}
\newcommand{\px}{\mathbf{x}}
\newcommand{\pu}{\mathbf{u}}

\newcommand{\reals}{\mathbb{R}}
\newcommand{\uspace}{\mathbb{U}}
\newcommand{\conts}[1]{\mathcal{C}^\omega(#1)}

\newcommand{\divergence}{\nabla \cdot}

\newcommand{\vol}[1]{\mathrm{Vol}\big(#1\big)}
\newcommand{\toi}{\tau}
\newcommand{\cmass}{\Omega_\toi}
\newcommand{\volf}{V_\Omega}
\newcommand{\volfapprox}{\tilde{V}_\Omega}
\newcommand{\volfbound}{\bar{V}_\Omega}
\theoremstyle{definition}
\newtheorem{definition}{Definition}
\newtheorem{problem}{Problem}

\theoremstyle{plain}
\newtheorem{theorem}{Theorem}
\newtheorem{corollary}[theorem]{Corollary}

\newtheorem{lemma}[theorem]{Lemma}

\theoremstyle{plain}
\newtheorem{remark}{Remark}

\usepackage{eso-pic}

\begin{document}

%\AddToShipoutPictureBG*{%
%  \AtPageUpperLeft{%
%    \hspace{16.5cm}%
%    \raisebox{-1.5cm}{%
%      \makebox[0pt][r]{To appear in ACM Int'l Conf. on Hybrid Systems: Computation and Control (HSCC), and ACM/IEEE In'l Conference on Cyber-Physical Systems (ICCPS) 2026}}}}

%\AddToShipoutPictureBG*{%
%  \AtPageUpperCenter{%
%    \raisebox{-1.5cm}{%
%      \makebox[0pt]{\small
%      To appear in ACM Int'l Conf. on Hybrid Systems: Computation and Control (HSCC), and ACM/IEEE Int'l Conference on Cyber-Physical Systems (ICCPS) 2026}%
%    }%
%  }%
%}

\AddToShipoutPictureBG*{%
  \AtPageUpperLeft{%
    \raisebox{-1.5cm}{%
      \makebox[\paperwidth][c]{\small
      To appear in ACM Int'l Conf. on Hybrid Systems: Computation and Control (HSCC), and ACM/IEEE Int'l Conference on Cyber-Physical Systems (ICCPS) 2026}%
    }%
  }%
}

%\AddToShipoutPictureBG*{%
%  \AtPageUpperLeft{%
%    \hspace*{\paperwidth}%
%    \raisebox{-1.5cm}[0pt][0pt]{%
%      \makebox[0pt][r]{\small
%      To appear in ACM Int'l Conf. on Hybrid Systems: Computation and Control (HSCC), and ACM/IEEE Int'l Conference on Cyber-Physical Systems (ICCPS) 2026}%
%    }%
%  }%
%}
%\begin{center}
%{\Large\bfseries My Header Title}\\[0.5em]
%{\small Optional subtitle or info}
%\end{center}
%
%\vspace{1em}
      
\title{
    %Data-Driven Modeling of Non-Linear Systems for Uncertainty Propagation and Probabilistic Evaluation
    Learning Nonlinear Continuous-Time Systems for Formal Uncertainty Propagation and Probabilistic Evaluation
}

\author{Peter Amorese}
\email{Peter.Amorese@colorado.edu}
\affiliation{%
    \institution{University of Colorado Boulder}
    \city{Boulder}
    \state{Colorado}
    \country{USA}
}
\author{Morteza Lahijanian}
\email{Morteza.Lahijanian@colorado.edu}
%\affiliation{\institution{University of Colorado Boulder}}
\affiliation{%
    \institution{University of Colorado Boulder}
    \city{Boulder}
    \state{Colorado}
    \country{USA}
}

%\thispagestyle{empty}
%\pagestyle{empty}

%%%%%%%%%%%%%%%%%%%%%%%%%%%%%%%%%%%%%%%%%%%%%%%%%%%%%%%%%%%%%%%%%%%%%%%%%%%%%%%%
\begin{abstract}
% Non-linear ordinary differential equations (ODE) can effectively model many real world dynamical systems. 
% Often, when using such models to predict future outcomes, accounting for uncertainty in the initial state can enhance the quality and robustness of the model's predictions. 
% Unfortunately, propagating uncertainty through non-linear ODEs is mathematically challenging, often requiring sampling-based approximation. In scenarios where the ODE is unknown and must be learned from data, a central question arises: can one choose a model architecture that readily permits formal uncertainty prediction and quantification of probabilistic events? This paper presents a novel continuum dynamics perspective on modeling for feasible uncertainty propagation based on constructing formal Taylor series approximations of probabilistic events. We provide a rigorous theoretical analysis of i) sufficient conditions for soundness of the approach and ii) universal asymptotic convergence of the model and its predictions. We validate the proposed approach with empirical case studies, emphasizing the efficacy of the proposed method particularly for prediction of rare-events.
%, particularly when quantifying the probability of the system entering a region at a future time.

Nonlinear ordinary differential equations (ODEs) are powerful tools for modeling real-world dynamical systems. However, propagating initial state uncertainty through nonlinear dynamics, especially when the ODE is unknown and learned from data, remains a major challenge. This paper introduces a novel continuum dynamics perspective for model learning that enables formal uncertainty propagation by constructing Taylor series approximations of probabilistic events. We establish sufficient conditions for the soundness of the approach and prove its asymptotic convergence. Empirical results demonstrate the framework’s effectiveness, particularly when predicting rare events.
\end{abstract}

\keywords{Uncertainty Propagation, Nonlinear Ordinary Differential Equations, Taylor Series Approximation, Continuum Dynamics, Reachability}

\maketitle
%%%%%%%%%%%%%%%%%%%%%%%%%%%%%%%%%%%%%%%%%%%%%%%%%%%%%%%%%%%%%%%%%%%%%%%%%%%%%%%%

\section{Introduction}
Modeling real world dynamical systems is a core tenant of predicting their future behavior. 
Specifically, nonlinear ordinary differential equations (ODE) are known to capture the dynamics of many continuous-space, continuous-time systems~\cite{ehrendorfer1994liouville, sun2019nonlinear}.
When the ODE is not known, machine learning methods can generate accurate estimates from data.
Often, in such situations, there may also be uncertainty over the initial conditions of the model, which may lead to drastically different predictions.
However, using such models for predicting the future behavior of the system, particularly subject to uncertainty in the initial condition, is challenging since nonlinear ODEs generally do not admit analytical solutions.
This paper seeks to understand the connection between the choice of model and the prediction under uncertainty. In particular, \textit{can we learn a non-linear ODE model such that formal uncertainty propagation becomes more feasible?} 
% We present a continuum-dynamics perspective on the uncertainty propagation and modeling problem, and rigorously study an uncertainty propagation approach based on Taylor series approximation. 
We aim to address this question from a continuum-dynamics perspective on the uncertainty propagation and modeling problem.
% , and rigorously study an uncertainty propagation approach based on Taylor series approximation.

% This work focuses 

Uncertainty propagation is usually performed for the purposes of probabilistic evaluation, i.e., predicting the probability that the system's state $\px$ is in a certain region $R$ at a future time of interest $\toi$.
This task combines two mathematically challenging procedures: (i) propagating a probability density function (pdf) through the nonlinear model, and then (ii) integrating the resulting pdf over $R$. Directly propagating the pdf inherits the same analytical infeasibility as solving the nonlinear ODE, and thus, generally requires approximation. Furthermore, even supposing a tractable expression for the density, integration over $R$ may also be infeasible, requiring another approximation technique.

Density propagation is a well-studied problem, governed by the Liouville partial differential equation \cite{brockett2007optimal, li2008principle}.
Physics-informed methods learn an approximate solution to the Liouville equation as a neural-network or generative model using physics-informed learning \cite{meng2022learning, liu2022neural, he2025adaptive, kong2024error}. For probabilistic evaluation however, candidate models of the pdf, e.g., a neural network, are generally not analytically integrable, and thus cannot be easily used for probabilistic evaluation.
For quantification of probabilistic events, Gaussian mixture models combined with linear approximations of the dynamics have been used to reconstruct integrable propagated densities \cite{terejanu2008uncertainty}, but they struggle to provide formal bounds on probabilistic events. Moment-based polynomial optimization techniques \cite{streif2014probabilistic, covella2022uncertainty} can provide formal bounds on probabilistic events for polynomial ODEs, while being limited to bounded-support initial distributions. Discretization-based reachability approaches use overlapping forward reachable sets to estimate the probability \cite{gray2024verified}; however, such methods struggle to provide convergence guarantees due the strict over-approximations of reachable sets.

Sampling-based particle simulations are an effective approximate method of uncertainty propagation through non-linear ODEs \cite{yang2018effectiveness}.
Unlike density propagation methods, the propagation and probabilistic evaluation problem is very straight forward to approximate via sampling-based approaches. Initial states can be sampled, and each sample can be simulated forward with numerical integration. The ratio of samples that end in $R$ at time $\toi$ to the total number of samples estimates the desired probability. 
% While they provide simple computations, sampling-based approaches fail to yield a formal hard bound on the predicted probability.  Hence, they must instead rely on confidence-based concentration inequalities. Furthermore, 
Despite their simplicity, these methods do not provide formal, hard bounds on the resulting probability estimates and thus rely on confidence-based concentration inequalities instead. Moreover,
the accuracy of such methods degrades heavily when predicting rare events \cite{chan2011rare}, i.e., probabilities on the order of $10^{-6}$ and below. 

With regard to modeling, i.e., learning the ODE, SINDy \cite{brunton2016discovering} is a prominent framework for efficiently recovering the system's dynamics from data. Deep learning approaches can also learn the ODE as an expressive neural network and have straight-forward applications for (approximate) density propagation via continuous normalizing flows \cite{chen2018neural}. While these methods can learn highly accurate models, they do not take into account formal uncertainty propagation and evaluation and hence generally must rely on sampling-based methods. Recent work \cite{amorese2025universal} has tackled the joint modeling and uncertainty propagation problem; however, the methods only apply to discrete-time systems. 

This work addresses the challenges of propagating the pdf by introducing a continuum dynamics-based simplification to the propagation and evaluation problems, and by coupling the learned ODE architecture to match the requirements of the propagation procedure. 
%We view the problem as tracking a \textit{control mass} that containing the desired probability mass and moves subject to a velocity field. 
With a clever domain transformation, we can reduce the complicated state-prediction and integration problem to approximation of a scalar function, capturing the event-probability throughout time. 
A Taylor series approximation of the probability function is constructed to formally bound the model's probabilistic predictions.
We provide in-depth analysis of the restrictions on the learned model's architecture, and suggest a learning framework that possess strong asymptotic convergence guarantees.
%We approximate and upper-bound this function using Taylor series, and refine this approximation using off-the-shelf reachable set algorithms. 

The main contributions of this work are four-fold:
\begin{enumerate}[label=\roman*.]
    \item a novel continuum-dynamics based approach to uncertainty propagation,
    \item a rigorous analysis of the soundness and modeling constraints of the approach,
    \item an algorithm that uses off-the-shelf reachability tools to improve the predictions for longer horizons, and
    \item experimental results and analysis of the method.
\end{enumerate}

\subsection{Notation}
Vectors are denoted in bold, e.g. $\px = [x_0, \ldots, x_n]$.
For a vector $y \in \reals^n$, we denote its $i$-th dimension by $y_i$.
The $n$-length vector of ones is denoted $\mathbf{1}_n$.
Let $\conts{U}$ denote the set of all real-valued analytic functions on a set $U$.

\section{Problem Formulation}

Consider a dynamical system described by
\begin{equation} \label{eq:dynamics}
    \dot{\px} = f(\px)
\end{equation}
where $\px \in \reals^n$, and $f: \reals^n \to \reals^n$ is the vector field and possibly nonlinear. We assume that $f$ is globally Lipschitz continuous and each $f_i \in \conts{\reals^n}$.
%each component $f_i$ is $m$-times differentiable on $\reals^n$, denoted $f_i\in \conts{m}{\reals}$. 
Starting from an initial state $\px_0$, the system follows a unique trajectory flow-function $\phi(\px_0, t)$ where $t \in \reals_{\geq 0}$ and $\phi$ satisfies 
the ordinary differential equation
\begin{equation} \label{eq:ode}
    \frac{\partial \phi}{\partial t}(\px_0, t) = f(\phi(\px_0, t)).
\end{equation}

In this paper, we focus on the setting where $f$ is \emph{unknown}, and the initial state $\px_0$ has \emph{uncertainty}, characterized by the 
% random variable $\px_0 \sim p(\px_0)$ with 
distribution $p_0(\px_0)$.  Specifically, we assume that $p_0(\px_0)$ is a Normal distribution with mean $\mu$ and (non-degenerate) covariance matrix $\Sigma$, i.e., $\px_0 \sim p_0(\px_0) = \mathcal{N}(\mu, \Sigma)$, and that $\Sigma$ is diagonal
without loss of generality\footnote{Using a (linear) Mahalanobis transformation, System~\eqref{eq:dynamics} can always be mapped to a space, where the covariance of the initial state is diagonal.}.

Since $f$ is unknown, we assume a dataset of the form $\mathcal{D} = \{(\hat{\px}_i, f(\hat{\px}_i))\}_{i\in \mathbb{N}}$ is available, which allows learning an approximation of $f$. Let $\hat{f}$ denote the learned vector field and $\hat{\phi}(\px, t)$ be its corresponding flow function solution.

% In this paper, we consider the problem where $f$ is unknown, and must be learned from the data $\mathcal D = \{(\hat{\px}, f(\hat{\px}))_j\}$ where $j \in \{1, \ldots, p\}$. Let $\hat{f}$ denote the learned model and $\hat{\phi}(\px, t)$ denote the flow function that solves $\hat{f}$. The system's initial state has uncertainty, captured by the random variable $\px_0 \sim p(\px_0)$. For this work, we assume the initial state components are independent and normally distributed, i.e. $p(\px_0) = \mathcal N(\mu, \Sigma)$ where $\Sigma$ is a diagonal covariance, and thus can be factored as $p(\px_0) = p(x_{0, 1}) \cdots p(x_{0, n})$. 

Our overarching goal is to determine (a bound on) the probability of System~\eqref{eq:dynamics} being in a given region of interest $R \subset \mathbb{R}^n$ at a specific time $t=\toi$, denoted as $P(\phi(\px_0, \toi) \in R)$. 
This generally involves two steps: learning the model, and propagating uncertainty through the learned model.
This paper mainly focuses on the latter, the sub-problem of computing a formally guaranteed probability (bound) with respect to the \textit{learned} model, i.e., $P(\hat{\phi}(\px_0, \toi) \in R)$. Consequently, by ensuring that the learned model $\hat{f}$ possesses universal approximation and asymptotic convergence properties, the model's predictions thereby inherit the same convergence guarantees. Formally, we can define the convergence properties as follows.

\begin{definition}[Convergent Universal Estimator] \label{def:convergent}
    Let $\mathcal F_\theta \subset \conts{\uspace^n}$ be a class of parameterized functions 
    and $\mathcal L_{\mathcal D} : \mathcal F_\theta \rightarrow \reals$ be a loss function that is convex in the parameters $\theta$. 
    Let $\hat{f}_\theta^\star \in \mathcal F_\theta$ denote the global minimizer of the loss, i.e.
    \begin{equation}
        \hat{f}_\theta^\star = \argmin_{\hat{f}_\theta \in \mathcal F_\theta} \mathcal L_{\mathcal D}(\hat{f}_\theta).
    \end{equation}
    Then, for some arbitrary closed hyperrectangle $U \subset \uspace^n$, $\hat{f}^\star_\theta$ is a \emph{convergent universal estimator} if 
    \begin{subequations}
        \begin{align}
            \hat{f}_\theta^\star = \argmin_{\hat{f}_\theta \in \mathcal F_\theta} \mathcal L_{\mathcal D}(\hat{f}_\theta) 
        \end{align}
        and
        \begin{align}
            \lim_{|\theta|, |\mathcal D| \rightarrow \infty} \|\hat{f}_\theta^\star - f\|_{\mathcal C^k} = 0.
        \end{align}
    \end{subequations}
    for any $k > 0$ where $\| \cdot \|_{\mathcal C^k}$ denotes the $\mathcal C^k$-norm.
\end{definition}

%\begin{definition} [Convex Least-Squares Universal Approximator] \label{def:universal}
%    Let $\chi \subset \reals^n$ be a compact bounded domain and let $\conts{\chi}$ be the set of analytic functions on $\chi$. Let $\mathcal{L}(\hat{f}; f) = \mathbb{E}[(\hat{f}(\px) - f(\px))^2]$ \pa{check this} be the least-squares loss function.
%    A convex universal approximator is a class of parametric functions $F_\theta \subseteq \conts{\chi}$ parameterized by $\theta$ that satisfies the two following properties:
%    \begin{enumerate}
%        \item for any $\epsilon > 0$, there exists a $\hat{f}_\theta \in F_\theta$ such that
%        \begin{equation}
%            \sup_{\px \in \chi} \|f(\px) - \hat{f}_\theta(\px)\| \leq \epsilon
%        \end{equation}
%        for any $f(\px) \in \conts{\chi}$, and
%        \item $\mathcal{L}(\hat{f}_\theta; f)$ is convex in $\theta$.
%    \end{enumerate}
%\end{definition}
%\noindent
%Definition \ref{def:universal} characterizes the strong asymptotic convergence properties necessary for the learned model. Namely, $F_\theta$ must parameterize a class of universal approximators for which the loss function is convex in the parameters, allowing the training procedure to actually obtain the least-squares minimizer.

Given a learned model $\hat{f}_\theta$, we are interested in propagating the uncertainty in $\px_0$ to predict the probability of the state entering a region of interest $R$ at a given time $t=\toi$, i.e. 
\begin{equation} \label{eq:prob_int}
    P(\hat{\phi}(\px_0, \toi) \in R) = \int_R p(\px_{\toi}) d\px_\toi
\end{equation}
where $p(\px_\toi)$ is the system's state distribution at time $t=\toi$ computed via
\begin{equation} \label{eq:prop}
    \log p(\px_\toi) = \log p_0(\px_0) - \int_{t=0}^{t=\toi} \nabla \cdot \hat{f}_\theta(\px_t) dt.
\end{equation}
Since $\hat{f}_\theta$ is non-linear,
predicting $P(\hat{\phi}(\px_0, \toi) \in R)$ using \eqref{eq:prob_int} and \eqref{eq:prop} is subject to two (generally) intractable integrals.
%, requiring approximation.
%Prediction error with respect to the true system $f$ is subject to two sources of error: i) \textit{learning}-error with respect to the true model and ii) \textit{approximation}-error with respect to the learned model. 

This work focuses on the coupling between a universal model, and a method of computing convergent formal bounds on \eqref{eq:prob_int}.
Formally, we consider the two-part problem as stated below.
\begin{problem} \label{prob}
    Consider System~\eqref{eq:dynamics} with unknown $f$ and uncertain initial state $\px_0 \sim \mathcal{N}(\mu,\Sigma).$
    Given data $\mathcal D$, a hyperrectangular region of interest $R \subset \reals^n$, and a time of interest $\toi$, 
    \begin{enumerate}%[label=(\arabic*)]
        \item learn convergent universal estimator model $\hat{f}_\theta$, and \label{prob:learn}
        \item compute $\bar{P}_\toi \in [0,1]$ such that  
        $P(\hat{\phi}(\px_0, \toi) \in R) \leq \bar{P}_\toi$. \label{prob:prop}
    \end{enumerate}
    
    %\begin{itemize}
    %    \item $P(\hat{\phi}(\px_0, \toi) \in R) \leq \bar{P}$, and
    %    \item $\bar{P}_m$ converges to the \textit{true} probability, i.e.,
    %    \begin{equation}
    %        \lim_{(|\mathcal{D}|, |\theta|, m) \rightarrow \infty} |\bar{P}_m - P(\phi(\px_0, \toi) \in R)| = 0.
    %    \end{equation}
    %\end{itemize}
\end{problem}

Initially, the \textit{model-learning} Problem~1.\ref{prob:learn} and the \textit{uncertainty propagation} Problem~1.\ref{prob:prop} may appear as disjoined problems.
%For example, if one were to directly regress $\hat{f}$, then the probability distribution $p(\px_t)$ can be propagated according to
%For general non-linear $f$, \eqref{eq:prop} is analytically intractable. Furthermore, even with an expression for $p(\px_\toi)$, the integral over $R$ in \eqref{eq:prob_int} is also generally intractable. 
%This makes computing a probabilistic bound $\bar{P}$ is a very challenging problem. 
In this work, we study the formal uncertainty propagation problem to reveal a structure of $\hat{f}_\theta$ that allows for tractable computation of $\bar{P}_\toi$ without compromising universal representational power. For ease of presentation, we drop the subscript $\theta$ and assume that $\hat{f}$ is a parameterized estimator.

\begin{figure*}[t]
    \centering

    %==================== TOP ROW ====================
    \begin{subfigure}{\textwidth}
        \centering
        \begin{tabular}{ccccc}
            % Row of text labels (only first and last non-empty)
            \makebox[0.18\textwidth][c]{\textbf{$t=0$}} &
            \makebox[0.18\textwidth][c]{} &
            \makebox[0.18\textwidth][c]{} &
            \makebox[0.18\textwidth][c]{} &
            \makebox[0.18\textwidth][c]{\textbf{$t=\toi$}} \\
            [0.0em]
            % Row of images
            \includegraphics[width=0.18\textwidth]{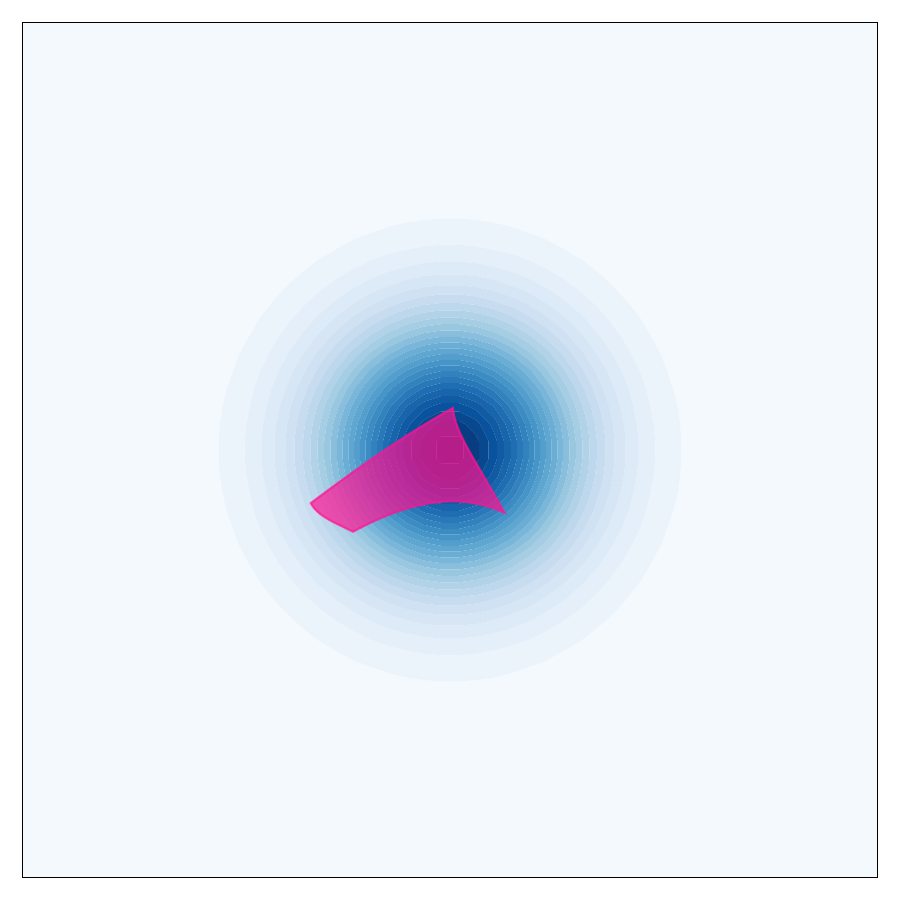} &
            \includegraphics[width=0.18\textwidth]{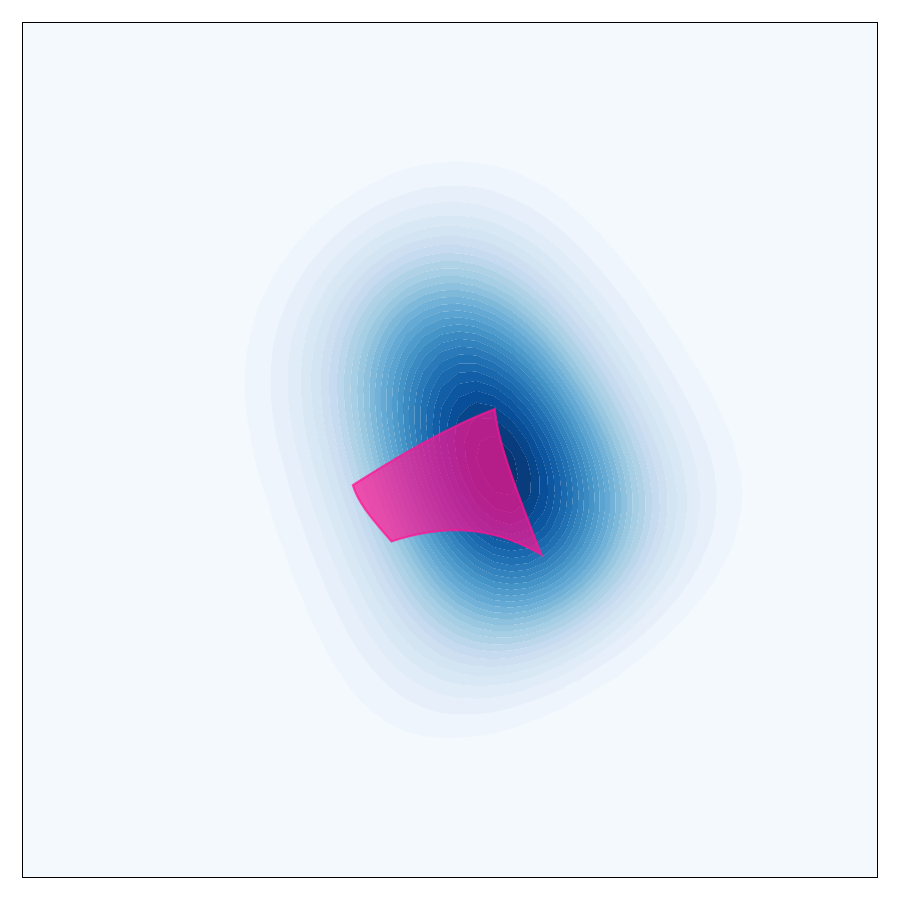} &
            \includegraphics[width=0.18\textwidth]{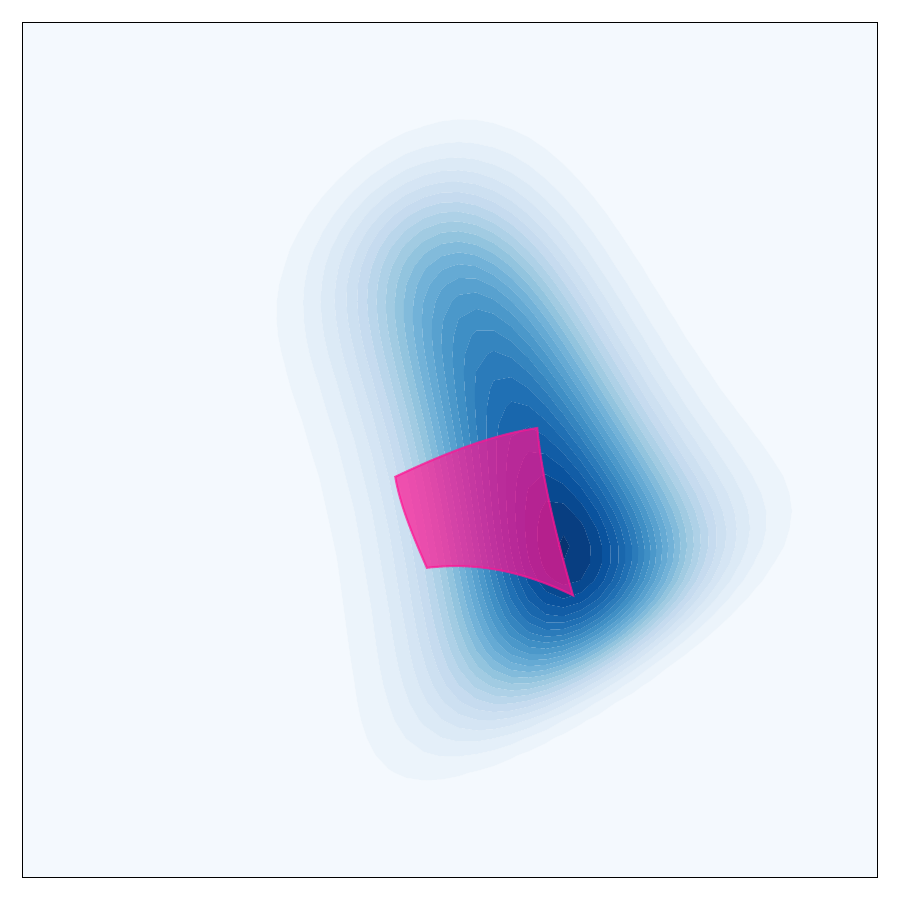} &
            \includegraphics[width=0.18\textwidth]{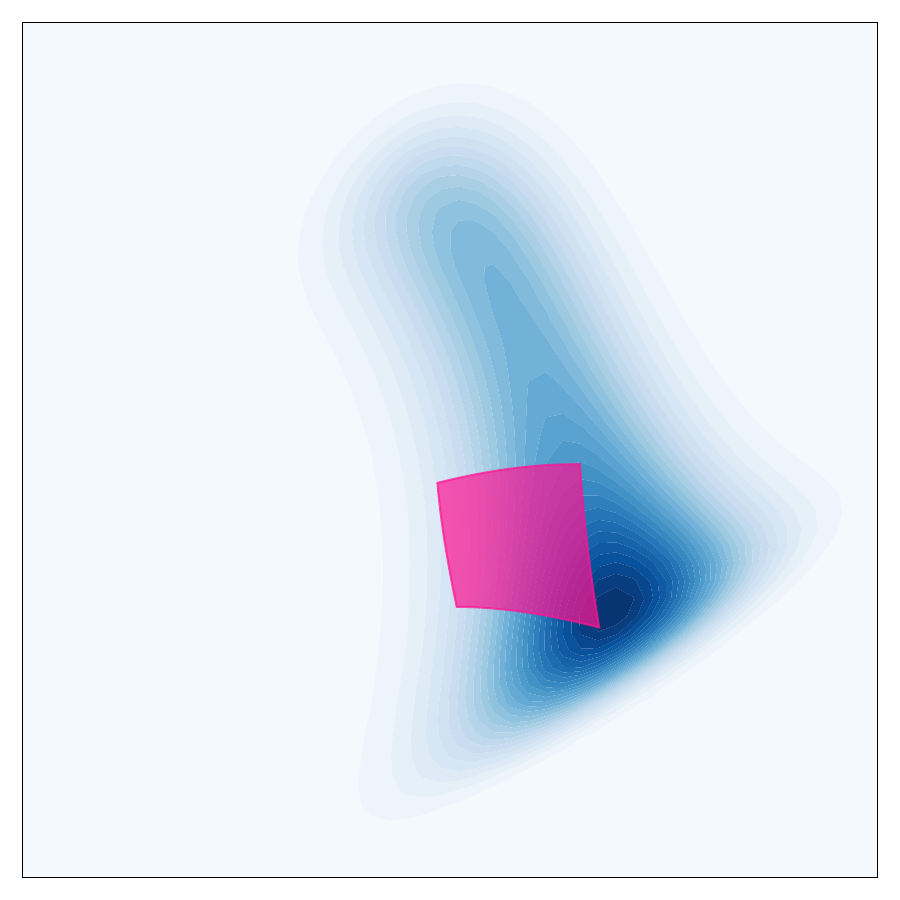} &
            \includegraphics[width=0.18\textwidth]{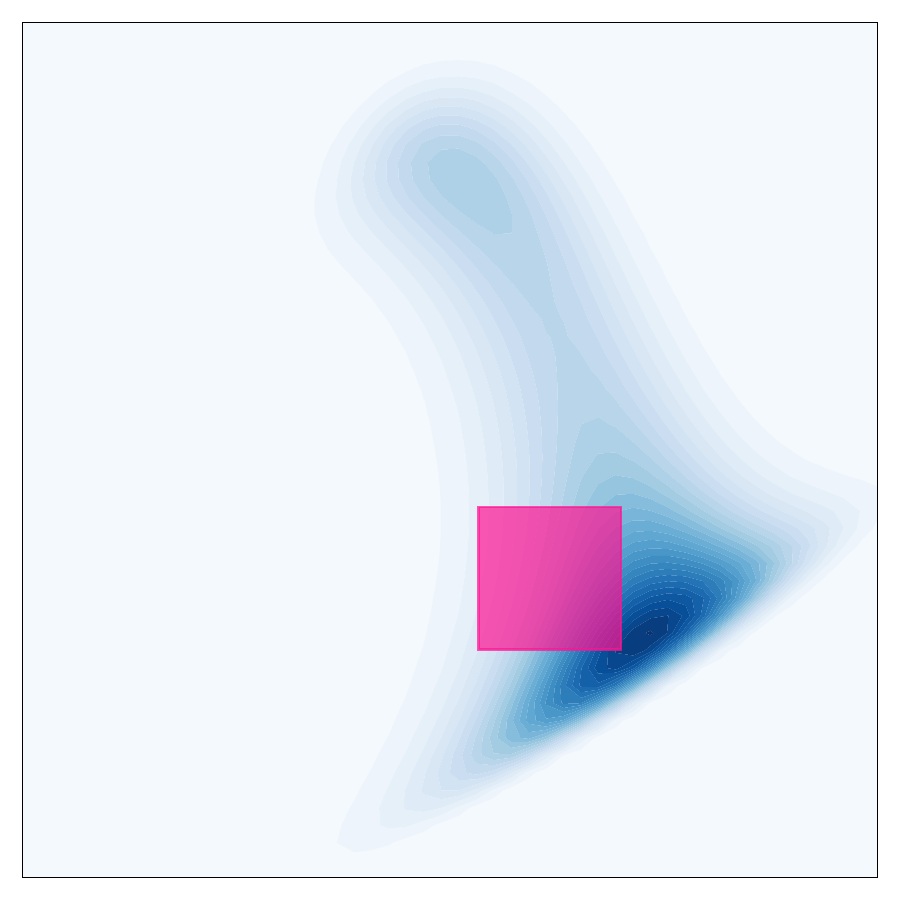}
        \end{tabular}
        \caption{State-space ($\reals^n$) evolution}
        \label{fig:ss}
    \end{subfigure}

    \vspace{1em}

    %==================== BOTTOM ROW ====================
    \begin{subfigure}{\textwidth}
        \centering
        \begin{tabular}{ccccc}
            % Row of text labels (only first and last non-empty)
            \makebox[0.18\textwidth][c]{\textbf{$t=0$}} &
            \makebox[0.18\textwidth][c]{} &
            \makebox[0.18\textwidth][c]{} &
            \makebox[0.18\textwidth][c]{} &
            \makebox[0.18\textwidth][c]{\textbf{$t=\toi$}} \\
            [0.0em]
            % Row of images
            \includegraphics[width=0.18\textwidth]{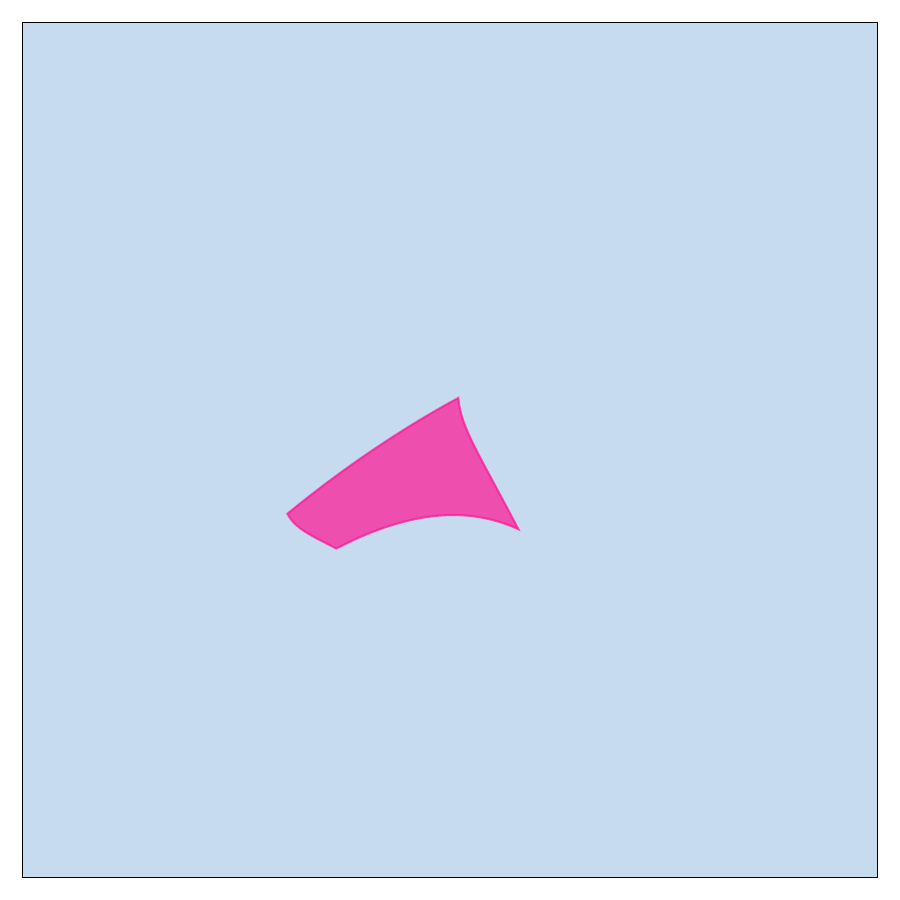} &
            \includegraphics[width=0.18\textwidth]{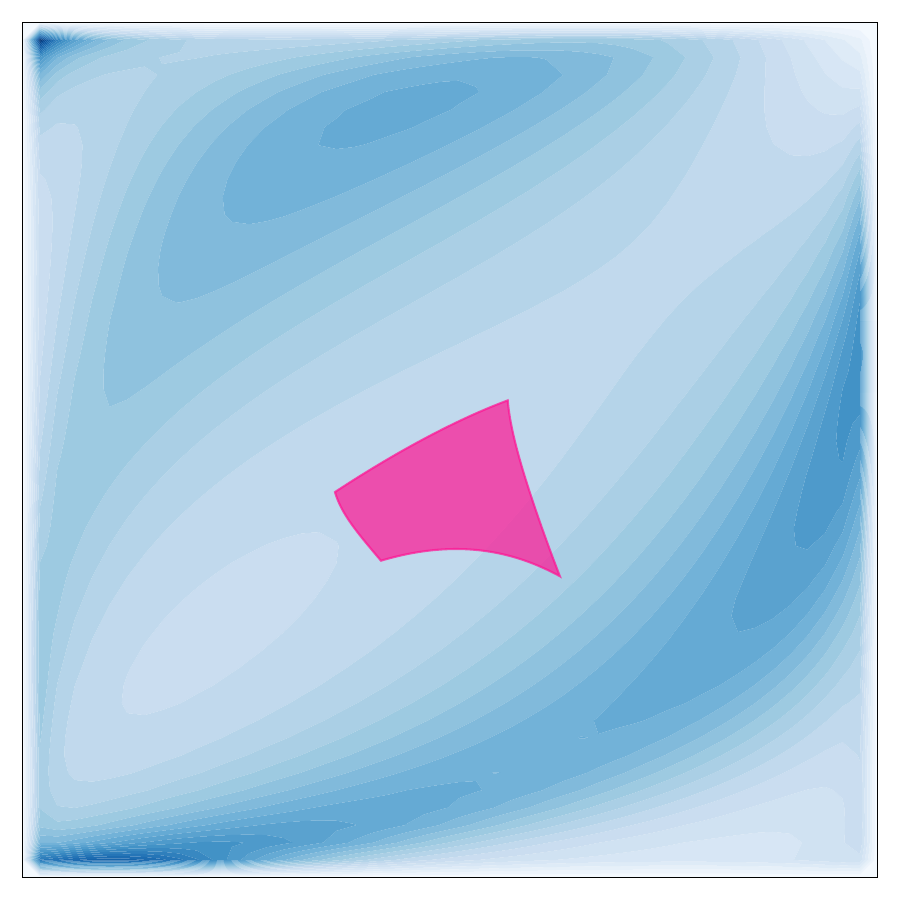} &
            \includegraphics[width=0.18\textwidth]{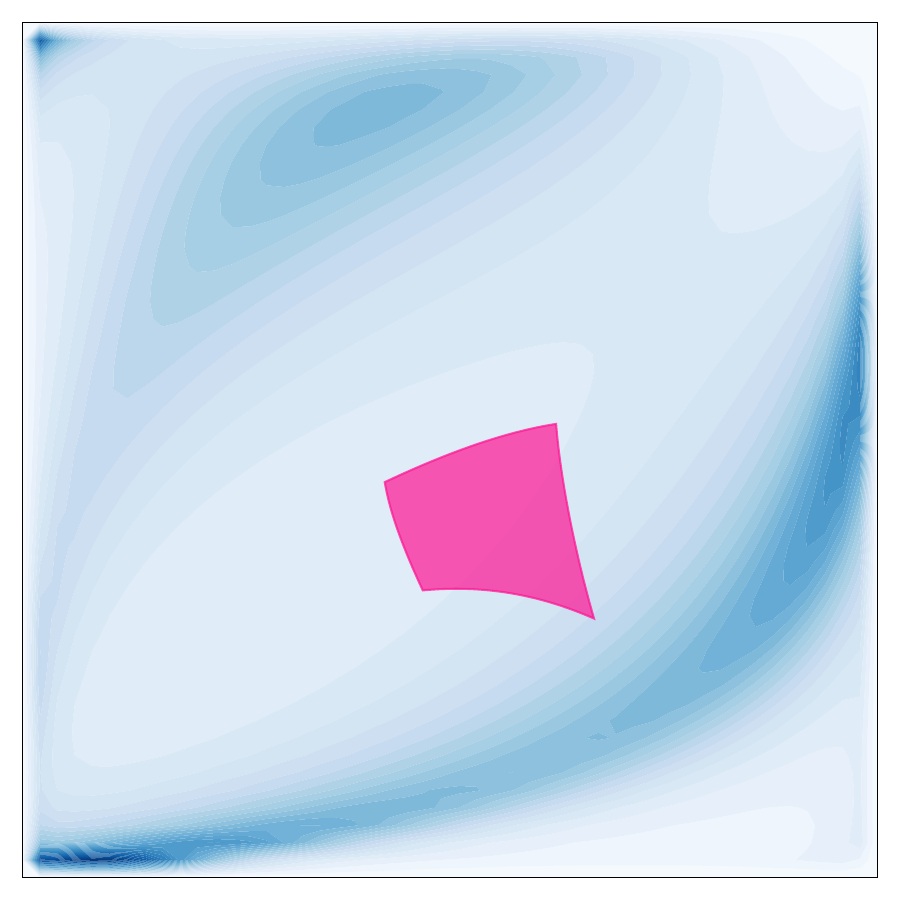} &
            \includegraphics[width=0.18\textwidth]{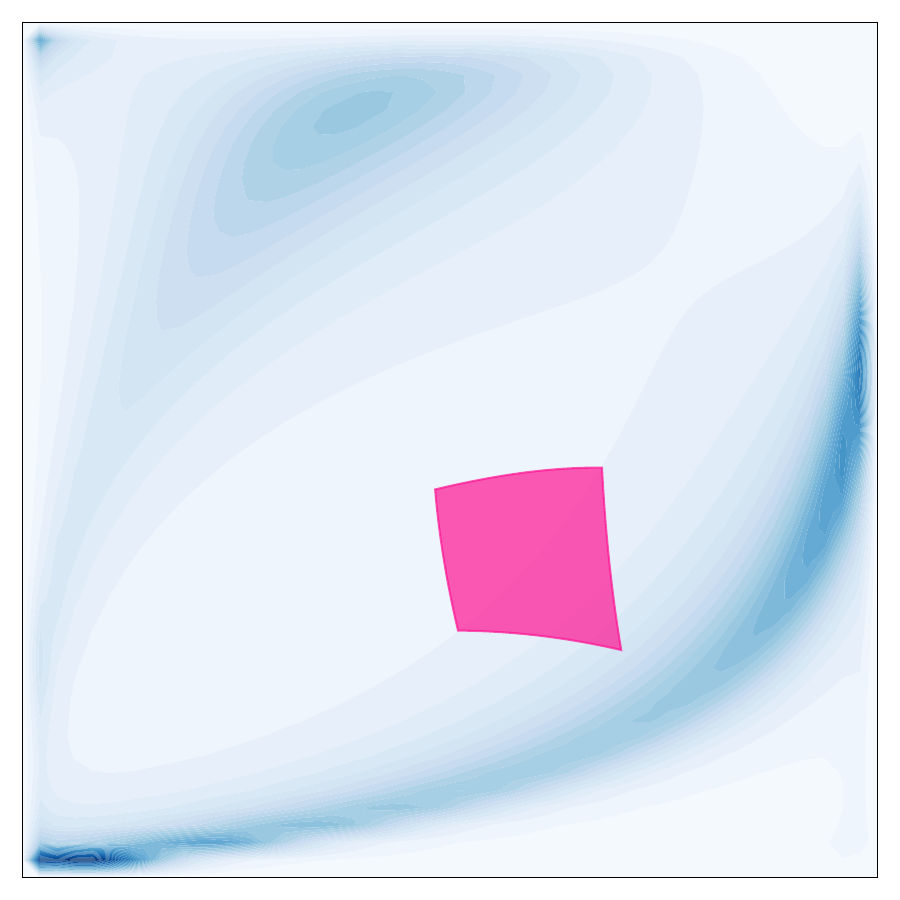} &
            \includegraphics[width=0.18\textwidth]{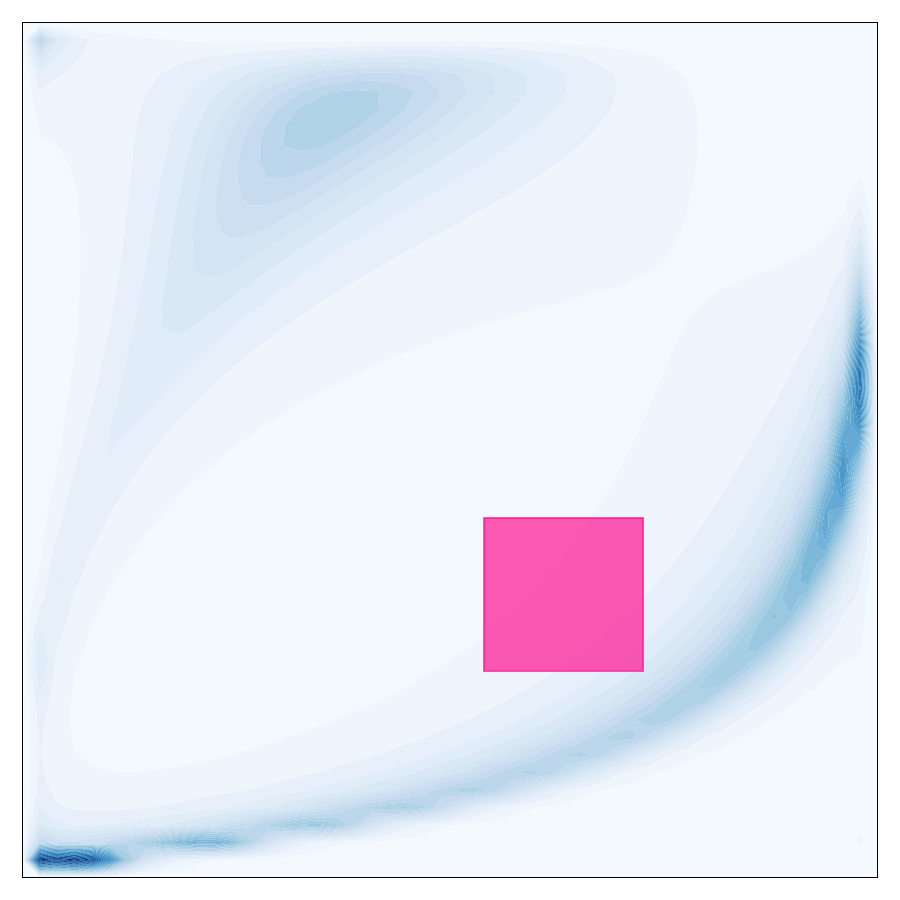}
        \end{tabular}
        \caption{Transformed-space ($\uspace^n$) evolution}
        \label{fig:us}
    \end{subfigure}

    \caption{ Example evolution of a PDF subject to a non-linear $\hat{f}$. We wish to integrate $p(\px_\toi)$ (blue density) at time $t=\toi$ over $\cmass(\toi)=R$ (pink region). At time $t=0$, the density is simple, i.e., Gaussian in $\reals^n$ or uniform in $\uspace^n$, but becomes complex as time progresses. At time $t=\toi$, $R$ is a simple rectangle, and the shape becomes more complex as time regresses. Since probability mass is a conserved quantity, the volume and density of $\cmass$ fluctuate such that its probability mass is constant throughout time.}
    \label{fig:ex-deformation}
\end{figure*}

\section{Uncertainty Propagation} \label{sec:prop}
This section proposes a method for computing probability bound $\bar{P}_\toi$.
Informed by the requirements of this procedure, we can choose a regressor $\hat{f}$ that enables tractable computation of $\bar{P}_\toi$.

\subsection{Control Mass Perspective}
The uncertainty propagation and integration problem can be alternatively viewed as tracking the properties of a \textit{control mass}, i.e., a collection of a fixed set of infinitesimal \textit{material elements} (e.g., states) that move subject to $\hat{f}$
\cite{gurtin1982introduction}. A non-linear $\hat{f}$ causes the volume of each of the material elements to change in time, and thus, distorting the shape of the control mass.

Fig. \ref{fig:ex-deformation} illustrates the relevant continuum dynamics of the control mass with respect to the evolution of the density in time. We study the control mass $\cmass$ of particles which occupy region $R$ at time $\toi$ (pink box in the rightmost plots), which, since $\hat{\phi}$ is a one-parameter group \cite{zeidler2012applied}, can be readily defined as
\begin{equation}
    \cmass(t) = \{\px_t \mid \hat{\phi}(\px_t, \toi-t) \in R\}.
\end{equation}
Determining the total \textit{mass} of $\cmass$ is equivalent to the computation in \eqref{eq:prob_int}. In Fig. \ref{fig:ss}, the state-space density is initially Gaussian, and deforms into a complex distribution subject to \eqref{eq:prop}. 
Conversely, at $t=0$, the $\cmass(0)$ begins with a highly complex shape, such that at $t=\toi$, the particles occupy the simple rectangular shape of $R$. Given a fixed $R$, the control mass is uniquely defined by the time instance $\toi$, since the set of material elements (or initial states) that occupy $R$ at $t=\toi_1$ may be different from those that occupy $R$ at a different time $\toi_2 \neq \toi_1$.

With this perspective, there are two equivalent means of computing $P(\hat{\phi}(\px_0, \toi) \in R)$:
(i) integrating the \textit{complicated} deformed density over a \textit{simple} rectangular region of integration at $t=\toi$: 
\begin{equation} \label{eq:future_time_int}
    \int_{R=\cmass(\toi)} p(\px_\toi) d\px_\toi
\end{equation}
or (ii) integrating a \textit{simple} density over a \textit{complicated} region of integration $t=0$:
\begin{equation} \label{eq:init_time_int}
    \int_{\cmass(0)} p_0(\px_0) d\px_0.
\end{equation}
At first glance, either option seems equally challenging. To elucidate the benefits of option (ii), we employ a domain transformation to make the initial distribution \textit{uniform}, effectively reducing the complicated integration in \eqref{eq:init_time_int} to calculating the volume of $\cmass(0)$.

Let $\uspace^n = (0, 1)^n$ denote the open unit-hyperrectangle.
Consider the diffeomorphism $\psi : R^n \rightarrow \uspace^n$ defined with respect to the parameters $\mu$ and $\Sigma$ of $p_0(\px_0)$:
\begin{equation}
    \psi_i(\px) = c(x_i; \mu_i, \Sigma_{ii})
\end{equation}
where $c(\cdot; \mu_i, \Sigma_{ii})$ is the cumulative distribution function (cdf) of $p_0(x_{0,i})$. 
Let $\pu = \psi(\px)$ denote the resulting transformed state in $\uspace^n$.
Since $\psi$ is constructed to be the cdf of each initial state component, the resulting initial distribution in $\uspace^n$
is uniform, i.e., $p_0(\pu_0) = \mathcal U((0, 1)^n)$. Consequently, \eqref{eq:init_time_int} simplifies to
\begin{align} \label{eq:vol}
    \int_{\cmass(0)} p_0(\px_0) d\px_0 = \int_{\psi(\cmass(0))} 1 d\pu_0 
    = \vol{\psi(\cmass(0))}.
\end{align}
Furthermore, using the following chain-rule relationship: 
\begin{align}
    \dot{\pu} &= \frac{\partial\psi}{\partial \px} \frac{\partial \px}{\partial t} \notag \\
    &= p_0\big(\psi^{-1}(\pu)\big) \dot{\px}\big(\psi^{-1}(\pu)\big),
\end{align}
any state-space vector field model can be equivalently expressed in $\uspace^n$. However, doing so imposes boundary conditions on the $\pu$-space model. Specifically, 
\begin{equation} \label{eq:boundary_conds}
     \dot{\pu}_i = 0 \quad \text{ if }  \quad u_i = 0 \text{ or } u_i = 1.
\end{equation}
 This ensures that the system never leaves $\uspace^n$ (and thus never leaves the state-space), since there must always exist a bijection between $\reals^n$ and $\uspace^n$.
The hyperrectangular region of interest $R$ can also be mapped to $\uspace^n$, i.e., $R_u = \psi(R)$, yielding another hyperrectangular region, since $\psi$ is fully decoupled and monotone.

Fig. \ref{fig:us} illustrates the equivalent propagation scenario in the transformed space $\uspace^n$. 
In both $\reals^n$ and $\uspace^n$, the probability mass of $\cmass$ is \textit{conserved} throughout time. In $\uspace^n$, however, the density is uniform at $t=0$, and thus, knowing \textit{only} the volume of $\cmass$ is sufficient to calculate the mass. Therefore, performing propagation in $\uspace^n$ elliminates the need for (i) performing integration over the initial distribution, (ii) tracking the $\pu$-space density, and (iii) tracking the shape of $\cmass(t)$. 
Both integrals involved with the computation in \eqref{eq:prob_int} reduce to tracking the evolution of the volume (a scalar quantity) of the $\pu$-space control mass $\psi(\cmass(t))$ backwards in time until $t=0$.
We henceforth study the evolution of the control mass evolution in $\uspace^n$, and, for simplicity of notation, we use $\hat{f}$, $\cmass$, and $R$ to denote the $\pu$-space vector field, $\pu$-space control mass, and $R_u$ respectively.

\subsection{Volume Function Taylor Expansion}
%Tracking only the volume of $\cmass_u$ over time entirely circumvents the need for maintaining a high-fidelity approximation of the shape. %We propose using a Taylor series expansion of the volume of $\cmass_u$ throughout time, denoted $\volf_{\cmass}(t)$.
%Suppose $\cmass(t)=R$. We wish to calculate the volume of $\cmass(0)$.
%The probability mass contained in $\cmass$ is simply the volume of $\cmass(0)$ since the density at $t=0$ is uniform. 

Let the function $\volf(\cdot; \toi) : \reals_{\geq 0} \rightarrow [0, 1]$ represent the volume of $\cmass(t)$ over time. The variable $t$ indicates \textit{where} in time a specific control mass is, whereas parameter $\toi$ indicates when the control mass occupies $R$, effectively capturing \textit{which} control mass is being tracked. For example, $\volf(t_1; \toi)$ and $\volf(t_2; \toi)$ describe the \textit{same} control mass (initial conditions) at different times, where as $\volf(t; \toi_1)$ and $\volf(t; \toi_2)$ describe \textit{different} control masses at the same time.
%the initial region $\cmass(0)$ such that, subject to $\hat{f}$, the region evolves to $\cmass(\toi) = R$. 
Leveraging the aforementioned properties of $\uspace^n$, the \textit{initial} volume equals the desired probability measure
\begin{equation}
    \volf(t=0; \toi) = P(\hat{\phi}(\pu_0, \toi) \in R).
\end{equation}
%From an alternate perspective, the control mass can be ``initialized'' at $t=\toi$ as $\cmass(\toi) = R$ and evolved \textit{backwards} in time, i.e., subject to $-\hat{f}$ until $t=0$.
%This can be computed by initializing $\cmass(t) = R$ and evolving subject to $-\hat{f}$, i.e., backward in time.
% Consider the function $\volf(t)$ which represents the volume of a control mass $\cmass(t)$ initialized at the future time $t$ as $\cmass(t)=R$ and evolves subject to $-\hat{f}$, i.e., backward in time. 
%Evaluating $\volf(t=T)$ is determines $\vol{\cmass(T)}$
%In other words, if a control mass occupying region $R$ is released in 
%In other words, $\volf(t)$ tracks the volume \textit{backward} in time of a control mass starts as $R$.
%, such that $\volf(T) = \vol{R}$.
Notice that $\volf(\toi; \toi) = \vol{R}$, and computing a $\bar{P}_\toi$ such that$\volf(0; \toi)\leq \bar{P}_\toi$ solves Problem~1.\ref{prob:prop}.
% condition (2) in Prob. \ref{prob}. 

Non-linear models $\hat{f}$ generate a volume function that generally cannot be expressed in closed form. 
%Additionally, since tracking the shape of $\cmass(t)$ is infeasible, we are restricted to geometric reasoning over only $R = \cmass(T)$. 
Moreover, the shape of $\cmass(t)$ has a simple description only at time $t=\toi$.
To address these challenges, we build a \textit{Taylor series} approximation of $\volf(t; \toi)$ about the point $t=\toi$:
\begin{equation} \label{eq:taylor}
    \volf(t; \toi) \approx \volfapprox^m(t; \toi) = \sum_k^m \frac{d^k \volf}{d t^k}\Big|_{t = \toi} \frac{(t-\toi)^k}{k!}.
\end{equation}
The Taylor series uses only point-wise information in the form of time derivatives of $\volf(t; \toi)$ at $t=\toi$ to approximate $\volf(t; \toi)$ elsewhere in time. By centering the Taylor expansion about $t=\toi$, the simplicity of the hyperrectangular shape of $\cmass$ enables exact computation of the expansion coefficients for certain classes of $\hat{f}$.

In order to calculate the coefficients, we use the continuum dynamics of $\cmass(t)$. 
%The volume of $\cmass(t)$ is a time-dependent scalar property of $\cmass(t)$. 
The time derivative of a (cumulative) time-dependent scalar quantity (such as the volume) over a material element $\cmass(t)$ is governed by Reynold's transport theorem \cite{gurtin1982introduction}, which states, for a given cumulative quantity of interest $\gamma(\pu, t)$,
\begin{equation} \label{eq:rtt}
    \frac{d}{dt} \int_{\cmass(t)} \gamma(\pu, t) d\pu = \int_{\cmass(t)} \frac{\partial \gamma}{\partial t} + \divergence \big(\gamma(\pu, t) \hat{f}(\pu)\big) d\pu.
\end{equation}
By letting $\gamma(\pu, t) = 1$, we can recover an expression for the first time derivative of the volume function:
\begin{equation} \label{eq:first_d}
    \frac{d}{dt}\volf(t; \toi) = \int_{\cmass(t)} \divergence \hat{f}(\pu) d\pu,
\end{equation}
which is a well-known property in continuum dynamics. Using this result, the second-order time derivative of the volume function is recovered by instead using $\gamma(\pu, t) = \divergence \hat{f}(\pu)$:
\begin{align} \label{eq:second_d}
    \frac{d^2}{dt^2}\volf(t;\toi) &= \frac{d}{dt} \int_{\cmass(t)} \divergence \hat{f} d\pu \\ \notag
    &= \int_{\cmass(t)} \divergence \big((\divergence \hat{f}) \hat{f} \big) d\pu.
\end{align}
One can apply the same substitution recursively to derive the expression for the $k$-th time derivative $\volf(t)$, as illustrated by the following lemma.

\begin{lemma} \label{lem:v_derivs}
    Let $\Gamma_{k=0}(\pu) = 1$ and $\Gamma_{k}(\pu)$ be recursively defined as 
    \begin{equation} \label{eq:gamma_operator}
        \Gamma_{k}(\pu) = \divergence \big(\Gamma_{k-1}\hat{f}\big).
    \end{equation}
    Then, the $k$-th time derivative of the volume function is
    \begin{equation} \label{eq:integ}
         \frac{d^k}{dt^k} \volf(t;\toi) = \int_{\cmass(t)} \Gamma_{k} d\pu.
    \end{equation}
\end{lemma}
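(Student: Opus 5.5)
We argue by induction on $k$, applying Reynold's transport theorem \eqref{eq:rtt} at each step with a time-independent integrand. The base case $k=0$ is the definition: $\volf(t;\toi) = \vol{\cmass(t)} = \int_{\cmass(t)} 1\, d\pu = \int_{\cmass(t)} \Gamma_0\, d\pu$. The case $k=1$ is exactly \eqref{eq:first_d}, obtained from \eqref{eq:rtt} with $\gamma \equiv 1$, since $\divergence(1\cdot \hat{f}) = \Gamma_1$.

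For the inductive step, assume $\frac{d^{k-1}}{dt^{k-1}}\volf(t;\toi) = \int_{\cmass(t)} \Gamma_{k-1}(\pu)\, d\pu$ for all $t$ in the relevant interval. Differentiate once more in $t$ and apply \eqref{eq:rtt} with $\gamma(\pu,t) = \Gamma_{k-1}(\pu)$. This $\gamma$ depends on $\pu$ only, because $\hat{f}$ is autonomous and $\Gamma_{k-1}$ is built from $\hat{f}$ by spatial derivatives and products alone. Hence $\partial\gamma/\partial t = 0$, and \eqref{eq:rtt} gives
\begin{equation*}
\frac{d^{k}}{dt^{k}}\volf(t;\toi) = \int_{\cmass(t)} \divergence\big(\Gamma_{k-1}\hat{f}\big)\, d\pu = \int_{\cmass(t)} \Gamma_k\, d\pu,
\end{equation*}
which is \eqref{eq:gamma_operator} and closes the induction. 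The derivation of \eqref{eq:second_d} is the case $k=2$ of this step.

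The main obstacle is showing that \eqref{eq:rtt} applies legitimately at every order. First, $\Gamma_{k-1}$ must be $C^1$ so that $\divergence(\Gamma_{k-1}\hat{f})$ exists and is continuous. Each $\Gamma_k$ is a polynomial in partial derivatives of $\hat{f}$ of order at most $k$, and $\hat{f}$ is analytic (we take $\hat{f}\in\mathcal F_\theta\subset\conts{\uspace^n}$, as in Definition~\ref{def:convergent}), so every $\Gamma_k$ is analytic, in particular $C^\infty$, on $\uspace^n$. Second, the region must move regularly. Because $\hat{\phi}$ is a one-parameter group of diffeomorphisms, $\cmass(t)=\hat{\phi}(R,t-\toi)$ is the image of the bounded hyperrectangle $R$ under a smooth flow with nonsingular Jacobian. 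Reynold's theorem can therefore be justified by the change of variables $\pu = \hat{\phi}(\mathbf{y}, t-\toi)$, $\mathbf{y}\in R$, together with the Liouville/Jacobi identity $\partial_t \det D\hat{\phi} = (\divergence\hat{f})\det D\hat{\phi}$ and differentiation under the integral sign over the fixed compact set $\bar R$.

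If $R$ touches the boundary of $\uspace^n$, interchanging differentiation and integration needs uniform bounds on $\Gamma_{k-1}$ and its derivatives on $\cmass(t)$. I would obtain these from the boundary conditions \eqref{eq:boundary_conds} and from $R$ lying in the compact closure, extending $\hat{f}$ continuously to the closed cube. Alternatively, I would simply assume that $\hat{f}$ is analytic on a neighborhood of $\overline{\uspace^n}$, which the later architecture is designed to satisfy.
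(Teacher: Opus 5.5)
Your proposal is correct and follows the paper's proof: induction on $k$, applying Reynold's transport theorem \eqref{eq:rtt} with $\gamma = \Gamma_{k-1}$, where $\partial \Gamma_{k-1}/\partial t = 0$ because $\hat{f}$ is autonomous. The paper invokes \eqref{eq:rtt} without comment, so your discussion of when it applies (smoothness of each $\Gamma_k$, the flow-map change of variables with the Jacobi identity, and the caveat for $R$ touching the boundary of the cube) adds rigor but does not change the route.
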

\begin{proof}
    The proof follows from induction. For $k=1$, \eqref{eq:first_d} holds. Assume
    \begin{equation}
         \frac{d^k}{dt^k} \volf(t;\toi) = \int_{\cmass(t)} \Gamma_{k}(\pu) d\pu.
    \end{equation}
    Then,
    \begin{align}
        \frac{d^{k+1}}{dt^{k+1}} \volf(t;\toi) =& \frac{d}{dt} \int_{\cmass(t)} \Gamma_{k}(\pu) d\pu \notag \\
        =& \int_{\cmass(t)} \frac{\partial \Gamma_k}{\partial t} + \divergence \big(\Gamma_k(\pu) \hat{f}(\pu)\big) d\pu
    \end{align}
    and $\frac{\partial \Gamma}{\partial t}=0$ since $\Gamma_{k}(\pu)$ is not time dependent.
\end{proof}

Lemma \ref{lem:v_derivs} describes a recursive method of calculating the time derivatives of the volume function. The functions $\Gamma_k$ are related to the point-wise $k$-th order volumetric rates of change, which, when integrated over $\cmass(t)$, yields the cumulative rate of change of the control mass volume.
%By choosing to expand the Taylor series about $t=\toi$, the region of integration $\cmass(t)=R$ is hyperrectangular. 
The integral in \eqref{eq:integ} is over $\cmass(t)$, which, evaluated at the expansion time $t=\tau$, yields the hyperrectangle integration region $R$. Therefore, if the antiderivative of $\Gamma_k$ can be analytically determined, then the volume function derivatives can be computed exactly. As a consequence, the Taylor expansion approximation degrades \textit{only} with respect to time, and does not rely on any spatial approximation. Therefore, the prediction quality does not depend on where $R$ is, and the method can produce accurate estimates even if $R$ lies in the tail of the distribution, i.e. a \textit{rare-event} (see Sec. \ref{sec:experiments}).

\begin{remark}
    The volume function time derivatives can be computed with time-dependent vector field model $\hat{f}(\pu, t)$ by following the same recursive procedure.
\end{remark}

However, building a Taylor series expansion \eqref{eq:taylor} with arbitrary $\hat{f}$ may not actually converge to the true volume function. In fact, such convergence exists iff $\volf(t)$ is analytic. Since $\hat{f}$ implicitly generates $\volf(t)$, we must characterize which models $\hat{f}$ generate analytic volume functions. To do so, we can study how the operator \eqref{eq:gamma_operator} generates derivatives, and show that the derivatives cannot grow too quickly (causing the Taylor series to diverge). This process is very challenging for arbitrary analytic $\hat{f}$, however, by making assumptions about the analytic \textit{continuation} of $\hat{f}$ in the complex plane, a sub-factorial bound can be achieved, ensuring the Taylor series converges.

\begin{theorem} \label{thm:analytic}
    Let $\hat{g}(\pw) : \mathbb{C}^n \rightarrow \mathbb{C}^n$ be the analytic continuation of $\hat{f}(\pu)$ to the complex domain $\pw \in \mathbb{C}^n$ with $Re(\pw) \in \uspace^n$. 
    Define $B(r, \pw_0)$ as the $n$-dimensional complex polydisk $B(r, \pw_0) = \{ \pw \in \mathbb{C}^n \text{ s.t. } |w_i - w_{0, i}| < r \; \forall 1\leq i \leq n\}$. 
    Then, if $\hat{g}(\pw)$ is holomorphic on $B(r, \pw_0)$ with $r > 1$ and $\pw_0 = \frac{1}{2} \mathbf{1}_n + \mathbf{0}i$, then $\volf(t)$ is analytic with infinite radius of convergence.
\end{theorem}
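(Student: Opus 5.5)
The plan is to reduce analyticity of $\volf(\cdot;\toi)$ with infinite radius to a sub-factorial growth bound on the integrands $\Gamma_k$ of Lemma~\ref{lem:v_derivs}. Concretely, I will aim for an estimate of the form
\begin{equation*}
\sup_{\pu \in \overline{\uspace^n}} |\Gamma_k(\pu)| \le C\,\rho^k\,(k!)^{\alpha}, \qquad \alpha<1 .
\end{equation*}
The estimate only needs to hold on $\overline{\uspace^n}$, because every $\cmass(t)\subset\uspace^n$ by the boundary conditions \eqref{eq:boundary_conds}. It then immediately gives
\begin{equation*}
\Big|\tfrac{d^k}{dt^k}\volf(t;\toi)\Big| \le \vol{\cmass(t)}\, C\rho^k (k!)^\alpha \le C\rho^k (k!)^\alpha
\end{equation*}
uniformly in $t$. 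Consequently the Taylor coefficients $\tfrac{1}{k!}\tfrac{d^k\volf}{dt^k}$ are bounded by $C\rho^k/(k!)^{1-\alpha}$, and the root test gives infinite radius of convergence about any expansion point, in particular $t=\toi$.

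The key steps are as follows.

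\textbf{Step 1: polynomial structure of $\Gamma_k$.} Since $\Gamma_k$ is built from $\hat f$ by the operator \eqref{eq:gamma_operator}, it extends holomorphically to $\hat\Gamma_k=\nabla_{\pw}\cdot(\hat\Gamma_{k-1}\hat g)$ on $B(r,\pw_0)$. Expanding recursively, $\Gamma_k$ is a sum of at most $n^k\cdot k!$ (ordered-product) terms. Each term is a product of $k$ components of $\hat g$ carrying a total of exactly $k$ partial derivatives, distributed over the factors.

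\textbf{Step 2: Cauchy estimates on the polydisk.} Every real point $\pu\in\overline{\uspace^n}$ satisfies $|u_i-\tfrac12|\le\tfrac12$. Hence the polydisk of radius $s=r-\tfrac12>\tfrac12$ about $\pu$ lies in $B(r,\pw_0)$. On a slightly smaller closed polydisk, where $|\hat g|\le M$, the Cauchy estimates give $|\partial^\beta \hat g_j(\pu)|\le M\,\beta!\,s'^{-|\beta|}$.

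\textbf{Step 3: combinatorial counting.} I will control the combinatorics so that the total contribution is at most $C\rho^k$ times a product of factorials $\prod_j \beta_j!$ with $\sum_j|\beta_j|=k$. This product is much smaller than $k!$ unless the derivatives concentrate on one factor, and I will count the multiplicity with which such concentrated terms can arise. I would organize the count with a majorant series. Replace $\hat g$ by the scalar majorant $G(z)=M/(1-z/s')$ and compare $\Gamma_k$ with the $k$-th iterate of $h\mapsto (n\,hG)'$ applied to $1$. This iterate can be computed essentially in closed form as derivatives of $\big(\int dz/(nG)\big)$-type compositions, i.e.\ via the one-dimensional flow of the majorant ODE.

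\textbf{Step 4: identify the series with $\volf$.} Once the bound holds, the Taylor series \eqref{eq:taylor} converges on all of $\reals$. Separately, $\volf(t;\toi)=\int_R \det D_\pu\hat\phi(\pu,t-\toi)\,d\pu$ by the change of variables along the flow. This expression is real-analytic in $t$, because $\hat f$ is analytic and the flow stays in the compact set $\overline{\uspace^n}$, where $\hat g$ is holomorphic in a uniform neighbourhood. Two real-analytic functions on the connected set $\reals_{\ge0}$ that agree near $\toi$ agree everywhere, so $\volf$ equals its everywhere-convergent Taylor series.

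\textbf{Main obstacle.} I expect Step 3, obtaining genuinely sub-factorial growth (exponent $\alpha<1$) rather than the naive $C\rho^k k!$, to be the main difficulty. The naive Cauchy-nesting argument, which shrinks the radius by $s/k$ at each application of the operator, only yields $k^k\sim e^k k!$, and that gives a finite radius. My first attempt will be to exploit the margin $r>1$, i.e.\ $s>\tfrac12$, exactly where it enters, together with the fact that the integration domain sits inside the real unit cube. I will use a single fixed polydisk rather than nested ones, bound each term of the Step 1 expansion directly by the product of Cauchy factorials $\prod_j\beta_j!$, and exploit the fact that the number of terms with a given derivative distribution $(\beta_j)$ is a multinomial count. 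If the majorant computation in Step 3 still produces $k!$ growth, this is precisely the point where the hypothesis $r>1$ must be used more sharply. In that case I would analyse the majorant ODE $\dot z=nG(z)$ and show that its flow extends to complex time in a sufficiently large region, so that the Taylor coefficients decay faster than geometrically.
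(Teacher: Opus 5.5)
Your reduction (a sub-factorial bound on $\Gamma_k$ over the closed cube implies that $V_\Omega$ is entire) is logically fine, and Steps 1, 2 and 4 are correct. Step 3, however, cannot be completed, and no sharper use of $r>1$ will rescue it: the estimate you aim for is false, and so is the statement.

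Take $n=1$ and $\hat f(u)=u(1-u)$. This is an entire Bernstein polynomial (coefficients $(0,\tfrac12,0)$) satisfying the boundary conditions. For $R=[a,b]\subset(0,1)$ the control mass is $[\hat\phi(a,t-\tau),\hat\phi(b,t-\tau)]$, so
\[
V_\Omega(t;\tau)=\hat\phi(b,t-\tau)-\hat\phi(a,t-\tau),\qquad \hat\phi(u_0,s)=\frac{u_0e^{s}}{1-u_0+u_0e^{s}}.
\]
The map $s\mapsto\hat\phi(u_0,s)$ has simple poles at $s=\ln\frac{1-u_0}{u_0}+(2m+1)\pi i$, $m\in\mathbb{Z}$. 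Since $u_0\mapsto\ln\frac{1-u_0}{u_0}$ is injective, the poles coming from $a$ and from $b$ cannot cancel. Hence the Taylor series of $V_\Omega$ about $t=\tau$ has finite radius $\rho_*=\min_{u_0\in\{a,b\}}\bigl(\ln^2\frac{1-u_0}{u_0}+\pi^2\bigr)^{1/2}$, about $3.17$ for $R=[0.4,0.6]$. It therefore diverges at $t=0$ whenever $\tau>\rho_*$.

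By Cauchy--Hadamard, for every $\epsilon>0$ we have $|\int_R\Gamma_k|=|V_\Omega^{(k)}(\tau)|\ge k!\,(\rho_*+\epsilon)^{-k}$ for infinitely many $k$. So $\sup_R|\Gamma_k|$ grows factorially, and no bound $C\rho^k(k!)^\alpha$ with $\alpha<1$ can hold.

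Your fallback fails for the same reason. The majorant ODE $\dot z=nM/(1-z/s')$, $z(0)=0$, has solution $z=s'\bigl(1-\sqrt{1-2nMt/s'}\bigr)$, with a branch point at the finite time $t=s'/(2nM)$. The majorant method therefore certifies exactly $C\lambda^k k!$ growth and a finite radius, which matches the example.

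The paper's own proof does not supply a missing idea you could borrow. It reaches an exponential bound $A_k\le c^k$ on $A_k=\sup_{B(1,\mathbf{w}_0)}|\Gamma_k|$ by applying a Cauchy estimate with factor $|w_i-w_{0,i}|^{-1}$ on the \emph{same} polydisk at every step. In effect it bounds $\sup_B|\partial_i\Gamma_{k-1}|$ by a $k$-independent constant times $\sup_B|\Gamma_{k-1}|$. That is not a valid form of Cauchy's inequality: the factor must be the inverse distance from the point to the boundary of the set on which the supremum is taken. Iterating it correctly forces the shrinking polydisks you mention, with the $k^k\sim e^k k!$ loss you identified, so your diagnosis of the obstacle is the accurate one.

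What your Steps 1, 2 and 4, together with the majorant comparison, do prove is a weaker but true statement. $V_\Omega(\cdot;\tau)$ is real-analytic, with $|V_\Omega^{(k)}(t)|\le C\lambda^k k!$ uniformly in $t$. Its radius of convergence about every point is therefore bounded below by a constant depending only on $n$, $M$ and $s'$. Reaching $t=0$ from $t=\tau$ then requires re-expanding at intermediate centres, which is effectively what the paper's flowpipe procedure does.
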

\begin{proof}
    First, we recall sufficient conditions for the volume function to be analytic with infinite radius of convergence.
    Let $r^k(t) = \volf(t) - \volfapprox(t)$ be the remainder of a degree-$k$ Taylor expansion. By Taylor's theorem,
    \begin{equation} \label{eq:remainder_bound}
        |r^k(t)| \leq M_k \frac{(t-\toi)^{k+1}}{(k+1)!}
    \end{equation}
    where $\big|\frac{d^{k+1}}{dt^{k+1}}\volf(\xi)\big| \leq M_k$ for all $\xi \in [t, \toi]$.
    If $\volf(t)$ is analytic then the expansion must converge, i.e., $\lim_{k \rightarrow \infty} |r^k(t)| = 0$. Therefore, it suffices to show that
    \begin{equation}
        \lim_{k \rightarrow \infty} \frac{M_k}{(k+1)!} = 0,
    \end{equation}
    i.e., $M_k$ grows slower than the factorial. 

    Next, to obtain a sub-factorial bound on $M_k$, we leverage Cauchy's estimate for multivariate holomorphic functions \cite{hormander1973introduction}. Specifically, for a complex-valued function $g$ that is holomorphic on $B$,
    \begin{equation} \label{eq:cauchy_est}
        \Big| \frac{\partial g}{\partial w_i}(\pw) \Big| \leq |w_i - w_{0, i}|^{-1} \sup_{\pw \in B} |g|.
    \end{equation}
    We aim to use \eqref{eq:cauchy_est} to bound the growth of consecutive $\Gamma_k(\pw)$ generated using $\hat{g}$ instead of $\hat{f}$.
    Since the operator \eqref{eq:gamma_operator} consists of multiplication, addition, and differentiation, each $\Gamma_k(\pw)$ is composed of algebraic operations and thus must be holomorphic on $B$. Additionally, any function that is holomorphic on $B$ must be bounded on a poly disk with radius smaller than or equal to $r$.
    %i.e., there exists some $0 \leq A < \infty$ such that $|\sup_{\pw \in B(r=1, \pw_0)} |\hat{g}_u(\pw)| \leq A$.
    %Importantly, the operator \eqref{eq:gamma_operator} consists of multiplication and differentiation, therefore each $\Gamma_k(\pw)$ is holomorphic on $B$ and $\Gamma_k(\pw) = \Gamma_k(\pu)$ when $\pw = \pu$. 

    Substituting in $\hat{g}$, the operator \eqref{eq:gamma_operator} can be equivalently written as
    \begin{equation}
        \Gamma_{k}(\pw) = \sum_{i=1}^n \frac{\partial \Gamma_{k-1}}{\partial w_i}(\pw) \hat{g}_{i}(\pw) + \Gamma_{k-1}(\pw) \frac{\partial \hat{g}_{i}}{\partial w_i}(\pw) \label{eq:expanded_gamma_op}
    \end{equation}
    Let $A_{k}= \sup_{\pw \in B(1, \pw_0)} | \Gamma_{k}(\pw)|$, and $\epsilon = \max_{1 \leq i \leq n} |w_i - w_{0, i}|^{-1}$.
    Then, applying \eqref{eq:cauchy_est} to \eqref{eq:expanded_gamma_op} yields the recursion
    \begin{align}
        A_{k} \leq& \sum_{i=1}^n \epsilon A_{k-1} |\hat{g}_{u, i}(\pw)| + \epsilon A_{k-1} \Big|\frac{\partial \hat{g}_{i}}{\partial w_i}(\pw)\Big| \notag \\
        =& A_{k-1} \epsilon \Big(\sum_{i=1}^n |\hat{g}_{u, i}(\pw)| + \Big|\frac{\partial \hat{g}_{i}}{\partial w_i}(\pw)\Big|\Big) \label{eq:gamma_mag_ub}
        %=& A s(\hat{g}).
    \end{align}
    Observing that $s=\epsilon \sum_{i=1}^n |\hat{g}_{i}(\pw)| + \big|\frac{\partial \hat{g}_{i}}{\partial w_i}(\pw)\big|$ is independent of $k$, a non-recursive bound for $A_k$ is obtained as
    \begin{equation}
        A_k \leq s^k.
    \end{equation}
    Following from \eqref{eq:integ}, to obtain a bound on the $k$-th time derivative of $\volf(t)$ the integral can be bounded by multiplying an upper bound of the integrand with the largest possible region of integration $\cmass$ can occupy (all of $\uspace^n$), i.e.,
    \begin{align}
         \int_{\cmass(t)} \Gamma_{k} d\pu \leq& \Big(\sup_{t \in [0, \infty)} \vol{\cmass(t)} \Big) \sup_{\pw \in B(1, \pw_0)} | \Gamma_{k}(\pw)| \notag\\
         \leq& \vol{\uspace^n} A_k.
    \end{align}
    Hence, since $\vol{\uspace^n} = 1$, \ $\frac{d^k}{dt^k}\volf(t) \leq s^k$
    showing exponentially bounded (sub-factorial) growth for all $t$, sufficient for $\volf(t)$ to be analytic with infinite radius of convergence.
\end{proof}

Theorem \ref{thm:analytic} describes sufficient conditions for analyticity of the volume function, thereby ensuring that $\volfapprox(t;\toi)$ in \eqref{eq:taylor}
converges to $\volf(t;\toi)$ for all time. In particular, it is sufficient to ensure that the analytic continuation of the chosen regressor $\hat{f}$ is holomorphic on a polydisk containing $\uspace^n$. This property holds for many classes of algebraic functions, e.g., polynomials, trigonometric functions, etc.

\subsection{Bounding the Volume Function}
The previous section presents a means of computing a truncated Taylor expansion of the model's true volume function about $t=\toi$. However, the truncated approximation does not produce the formal bound $\bar{P}_\toi$. To compute $\bar{P}_\toi$, we employ Taylor's theorem, i.e., a bound on the error between the truncation and the true function.

Recall, Taylor's theorem states that the remainder $r^k(t) = \volf(t) - \volfapprox(t)$ of a degree-$k$ Taylor expansion is bounded
by \eqref{eq:remainder_bound}.
%with
%\begin{equation}
%    |r^k(t)| \leq M_k \frac{(t-\toi)^{k+1}}{(k+1)!}
%\end{equation}
%where $\big|\frac{d^{k+1}}{dt^{k+1}}\volf(\xi) \big| \leq M_k$ for all $\xi \in [t, \toi]$.
Let $\mathcal{R}[t_1, t_2]$ be a set containing the union of all $\cmass(t)$ in the range $t_1 \leq t \leq t_2$, i.e., 
\begin{equation} \label{eq:tube_region}
    \mathcal R[t_1, t_2] \supseteq \{\pu \text{ s.t. } \pu \in \cmass(t) \text{ for some } t \in [t_1, t_2]\}.
\end{equation}
Intuitively, $\mathcal R$ describes an over-approximation of the ``tube'' carved out by $\cmass$ as it moves through the field.
A straight-forward bound $M_k$ can be computed using the ``worst-case'' point-wise volumetric growth over $\mathcal R[t, \toi]$. Specifically, the largest possible value of, e.g. $\int_{\cmass(t)} \Gamma_{k+1}(\pu) d\pu$ for any possible $\cmass(t) \subseteq \mathcal R[t, \toi]$ is bounded by the extrema of $\Gamma_{k+1}(\pu)$ multiplied by the volume of $R[t, \toi]$. The following theorem formalizes an upper bound for $t \leq \toi$.
%By interpreting $\Gamma_k(\pu)$ as the differential point-wise time derivatives, a straight forward bound $M_k$ can be constructed using the largest \textit{point-wise} volumetric growth over $\mathcal R$.

\begin{theorem} \label{thm:bound_1}
    Suppose $\hat{f}$ satisfies Theorem \ref{thm:analytic}. 
    Let
    \begin{equation}
        \delta_k \geq \sup_{\pu \in \mathcal R[t, \tau]} \Big((-1)^{k+1}\Gamma_{k+1}(\pu)\Big).
    \end{equation}
    Then, for $t \in [0, \toi]$,
    %\begin{equation} \label{eq:bound_1}
    %    \volf(t; \toi) \leq \volfapprox(t; \toi) + \bar{r}(t) 
    %\end{equation}
    %where
    \begin{equation} \label{eq:bound_1}
        r^k(t) \leq \delta_k\vol{\mathcal R[t, \toi]} \frac{|t-\toi|^{k+1}}{(k+1)!}.
    \end{equation}
\end{theorem}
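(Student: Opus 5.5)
We use the Lagrange form of the remainder rather than the absolute-value bound \eqref{eq:remainder_bound}, so that the sign of $(t-\toi)^{k+1}$ can be tracked. By Theorem \ref{thm:analytic}, $\volf(\cdot;\toi)$ is analytic, hence $C^{k+1}$ on $[t,\toi]$. Taylor's theorem then gives some $\xi \in [t, \toi]$ with
\begin{equation*}
    r^k(t) = \frac{d^{k+1}\volf}{dt^{k+1}}(\xi;\toi)\, \frac{(t-\toi)^{k+1}}{(k+1)!}.
\end{equation*}
Since $t \leq \toi$, we have $(t-\toi)^{k+1} = (-1)^{k+1}|t-\toi|^{k+1}$. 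Therefore
\begin{equation*}
    r^k(t) = (-1)^{k+1}\frac{d^{k+1}\volf}{dt^{k+1}}(\xi;\toi)\, \frac{|t-\toi|^{k+1}}{(k+1)!}.
\end{equation*}
It therefore suffices to bound $(-1)^{k+1}\frac{d^{k+1}}{dt^{k+1}}\volf(\xi;\toi)$ from above by $\delta_k \vol{\mathcal R[t,\toi]}$.

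Next, apply Lemma \ref{lem:v_derivs} at time $\xi$:
\begin{equation*}
    (-1)^{k+1}\frac{d^{k+1}}{dt^{k+1}}\volf(\xi;\toi) = \int_{\cmass(\xi)} (-1)^{k+1}\Gamma_{k+1}(\pu)\, d\pu .
\end{equation*}
Because $\xi \in [t,\toi]$, the definition \eqref{eq:tube_region} gives $\cmass(\xi) \subseteq \mathcal R[t,\toi]$. On this set the integrand is pointwise at most $\delta_k$. Hence the integral is at most $\delta_k\, \vol{\cmass(\xi)}$.

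The remaining step, which I expect to be the main obstacle, is to pass from $\vol{\cmass(\xi)}$ to $\vol{\mathcal R[t,\toi]}$. Monotonicity of volume gives $\vol{\cmass(\xi)} \leq \vol{\mathcal R[t,\toi]}$, but multiplying this inequality by $\delta_k$ preserves it only when $\delta_k \geq 0$. I would handle this with two cases.
\begin{itemize}
    \item If $\delta_k \geq 0$, the inequality goes through directly.
    \item If $\delta_k < 0$, then $\delta_k\vol{\cmass(\xi)} \leq \delta_k \vol{\mathcal R}$ would require $\vol{\cmass(\xi)} \geq \vol{\mathcal R}$, which generally fails. The way out is to observe that $\delta_k$ is only assumed to be an upper bound, so we may replace it by $\max(\delta_k,0)$.
\end{itemize}
So I would either state the result for $\delta_k \geq 0$ (w.l.o.g., since enlarging $\delta_k$ keeps it a valid upper bound) or note this convention explicitly. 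Combining the three displays then yields \eqref{eq:bound_1}.

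A minor point is justifying Lemma \ref{lem:v_derivs} at an interior time $\xi$ rather than only at $\toi$. Its proof via Reynolds' transport theorem \eqref{eq:rtt} holds for every $t$, so this needs only a remark.
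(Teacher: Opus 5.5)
Your proof is correct and is essentially the paper's argument. The paper handles the sign by reflecting time via $t'=\toi-t$ rather than through the Lagrange form. It then likewise writes the $(k+1)$-th derivative as $\int_{\cmass(\xi)}(-1)^{k+1}\Gamma_{k+1}\,d\pu$ by Lemma~\ref{lem:v_derivs} and bounds it by $\sup_{\mathcal R[t,\toi]}\big((-1)^{k+1}\Gamma_{k+1}\big)\cdot\vol{\mathcal R[t,\toi]}$; your per-$\xi$ estimate is cleaner than its detour through the volume-maximizing $\cmass^\star$.

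Your restriction to $\delta_k\geq 0$ is a genuine hypothesis, not a w.l.o.g.\ convention. The paper's proof uses it silently, and the stated inequality fails for $\delta_k<0$. Take $n=1$, $\hat f(u)=au$ with $a>0$ and $k$ even. Then $\volf(t;\toi)=\vol{R}e^{a(t-\toi)}$, $\Gamma_{k+1}\equiv a^{k+1}$, $\delta_k=-a^{k+1}$, and the Lagrange form gives $r^k(t)>-a^{k+1}\vol{R}\frac{|t-\toi|^{k+1}}{(k+1)!}\geq\delta_k\vol{\mathcal R[t,\toi]}\frac{|t-\toi|^{k+1}}{(k+1)!}$ for all $t\in[0,\toi)$.
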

\begin{proof}
    We aim to upper-bound the Taylor-series expansion on the interval $[t, \toi]$. 
    Consider the mapping $t' = g(t)$ defined as
    \begin{equation}
        g : t \mapsto \toi - t
    \end{equation}
    which re-centers the expansion about $0$ and reverses time, reflecting the expansion function about its center. We denote the translated and reflected expansion with $\volf'(g(t);\toi)$ such that $\volf'(g(t); \toi) = \volf(t; \toi)$.
    It follows from Taylor's theorem that for $t' > 0$,
    \begin{align}
        r'^k(t') \leq M^u_k \frac{(t')^{k+1}}{(k+1)!} 
    \end{align}
    with $M^u_k \geq \frac{d^{k+1}}{(dt')^{k+1}}\volf'(\xi)$ for all $\xi \geq 0$.
    The $t'$- and $t$-derivatives are related via
    \begin{equation}
        \frac{d^{k+1}}{(dt')^{k+1}}\volf'(t') = (-1)^{k+1} \frac{d^{k+1}}{(dt)^{k+1}}\volf(t).
    \end{equation}
    Using \eqref{eq:integ},
    \begin{align}
        \sup_{\xi' \in [0, t']} \frac{d^{k+1}}{(dt')^{k+1}}\volf'(\xi') =& \sup_{\xi \in [t, \toi]} (-1)^{k+1}\frac{d^{k+1}}{(dt)^{k+1}}\volf(\xi)  \notag \\
        =& \sup_{\xi \in [t, \toi]} \int_{\cmass(\xi)} (-1)^{k+1}\Gamma_k d\pu. \label{eq:supsup}
        %\leq& \sup_{\xi \in [t, \toi]}\vol{\cmass(\xi)} \sup_{\pu \in \mathcal R[t, \tau]} \Big((-1)^{k+1}\Gamma_{k+1}(\pu)\Big).
    \end{align}
    Let $\cmass^\star$ be the maximizer of $\sup_{\xi \in [t, \toi]}\vol{\cmass(\xi)}$. Then,
    \begin{align}
        \eqref{eq:supsup} \leq \vol{\cmass^\star} \sup_{\pu \in \cmass^\star} \Big((-1)^{k+1}\Gamma_{k+1}(\pu)\Big).
    \end{align}
    It follows that $\vol{\cmass^\star} \leq \vol{\mathcal R[t, \toi]}$ and $\sup_{\pu \in \cmass^\star} (\cdot) \leq \sup_{\pu \in R[t, \toi]}(\cdot)$.
\end{proof}
\noindent
Given a bound on $\Gamma_{k+1}$ over $\mathcal R[t, \tau]$, Theorem \ref{thm:bound_1} can be used to compute $\bar{P}_\toi$. This informs the choice of $\hat{f}$, since generating formal bounds over arbitrary continuous non-linear, non-monotonic functions is known to be a challenging problem \cite{murty1985some}. Choosing $\hat{f}$ to belong to a class of functions that generate $\Gamma_k$ for which there exists a tractable means of computing tight bounds can greatly reduce conservativeness of $\bar{P}_\toi$. However, the bounds cannot be too loose, or they may not adhere to the sub-factorial growth required for the Taylor series to converge (see the proof of Theorem \ref{thm:analytic}). 

\begin{corollary} \label{cor:bound_converge}
    Suppose $\hat{f}$ satisfies Theorem \ref{thm:analytic}.
    Let 
    \begin{equation}
        e_k = \delta_k - \sup_{\pu \in \mathcal R[t, \tau]} \Big((-1)^{k+1}\Gamma_{k+1}(\pu)\Big).
    \end{equation}
    Then, if 
    \begin{equation} \label{eq:bound_converge}
        \lim_{k\rightarrow \infty} \frac{e_k}{(k + 1)!} = 0,
    \end{equation}
    the remainder bound \eqref{eq:bound_1}
    converges to zero from above.
\end{corollary}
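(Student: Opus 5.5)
Writing $S_k = \sup_{\pu \in \mathcal R[t, \toi]} \big((-1)^{k+1}\Gamma_{k+1}(\pu)\big)$, we have $\delta_k = S_k + e_k$. The bound \eqref{eq:bound_1} then splits into two pieces:
\begin{equation*}
    \bar r^k(t) = \vol{\mathcal R[t,\toi]}\,\frac{|t-\toi|^{k+1}}{(k+1)!}\,\big(S_k + e_k\big).
\end{equation*}
The plan is to show each piece tends to zero, and separately that $\bar r^k(t)$ is an upper bound on $|r^k(t)|$, which gives convergence ``from above.''

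For the first piece, I would reuse the Cauchy-estimate argument from the proof of Theorem \ref{thm:analytic}. Since $\mathcal R[t,\toi] \subseteq \uspace^n \subset B(1,\pw_0)$ (real points of the polydisk), we get $|S_k| \le A_{k+1} \le s^{k+1}$. Hence
\begin{equation*}
    \frac{S_k\,|t-\toi|^{k+1}}{(k+1)!} \le \frac{\big(s|t-\toi|\big)^{k+1}}{(k+1)!} \to 0,
\end{equation*}
because exponentials are dominated by factorials. The volume factor is at most $\vol{\uspace^n}=1$ and does not depend on $k$.

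For the second piece, $e_k \ge 0$ because $\delta_k$ is required to dominate the supremum. The factor $|t-\toi|^{k+1}$ grows geometrically, so hypothesis \eqref{eq:bound_converge} as literally stated only directly handles $|t-\toi|\le 1$. This is the main obstacle. I would interpret the hypothesis in the spirit of the paper as requiring $e_k$ to be sub-factorial in the same sense as $M_k$, i.e.\ $e_k c^{k+1}/(k+1)! \to 0$ for every $c$ (for instance $e_k$ exponentially bounded). Under that interpretation the term vanishes for every $t\in[0,\toi]$. Alternatively, I would state the result for $|t-\toi|\le 1$ and extend it by chaining expansions over unit time steps, noting that this is the regime the paper's later reachability-based refinement uses anyway.

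Finally, the ``from above'' part follows from Theorem \ref{thm:bound_1}: for every $k$, $r^k(t) \le \bar r^k(t)$. Combined with $\bar r^k(t)\to 0$ and $r^k(t)\to 0$ (the latter by the analyticity in Theorem \ref{thm:analytic}), the sequence of upper bounds $\bar r^k(t)$ converges to the true remainder limit $0$ while always dominating $r^k(t)$. Therefore the resulting $\bar P_\toi = \volfapprox^k(0;\toi) + \bar r^k(0)$ is sound and converges to $\volf(0;\toi)$.
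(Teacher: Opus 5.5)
You follow the paper's route. Its proof is a one-line appeal to the sub-factorial growth from the proof of Theorem~\ref{thm:analytic} plus the hypothesis on $e_k$, and your split $\delta_k=S_k+e_k$ makes that explicit. Your observation that \eqref{eq:bound_converge} only controls the $e_k$-term when $|t-\toi|\le 1$ is correct and is a real defect of the statement: $e_k=k!$ satisfies it, yet $e_k2^{k+1}/(k+1)!\to\infty$. Neither of your repairs, however, proves the corollary as written. The first changes the hypothesis, and chaining changes the object, since \eqref{eq:bound_1} is the remainder bound of one expansion about $\toi$. Also, Theorem~\ref{thm:bound_1} gives only $r^k(t)\le\bar{r}^k(t)$, not a bound on $|r^k(t)|$.

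\textbf{The gap is the first piece.} The estimate $|S_k|\le A_{k+1}\le s^{k+1}$ comes from the recursion $A_k\le sA_{k-1}$ in the proof of Theorem~\ref{thm:analytic}, and that recursion misapplies Cauchy's inequality.
\begin{itemize}
\item $|\partial g/\partial w_i(\pw)|$ is bounded by $\rho^{-1}\sup|g|$ over a polydisk of radius $\rho$ \emph{centered at} $\pw$. So $\rho$ is at most the distance from $\pw$ to $\partial B$, not $|w_i-w_{0,i}|$; the $\epsilon$ there is not even finite at $\pw_0$.
\item Each differentiation therefore costs a shrinking of the polydisk, and iterating honestly gives only $\sup_{\uspace^n}|\Gamma_k|\le k!\,C^k$.
\end{itemize}

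\textbf{Factorial growth really occurs.} Take $n=1$ and $\hat{f}(u)=c\,u(1-u)$, an entire Bernstein polynomial satisfying \eqref{eq:boundary_conds}.
\begin{itemize}
\item In one dimension $\hat{f}\,\Gamma_{k-1}=(\hat{f}\tfrac{d}{du})^{k-1}\hat{f}=\partial_s^k\hat{\phi}(u,s)|_{s=0}$.
\item $\hat{\phi}(\tfrac12,s)=(1+e^{-cs})^{-1}$ has poles at $s=\pm i\pi/c$. Hence $|\Gamma_k(\tfrac12)|$ grows like $(k+1)!\,(c/\pi)^k$ along even $k$ (for instance $\Gamma_4(\tfrac12)=c^4$).
\item Moreover $(-1)^{k+1}\Gamma_{k+1}(\tfrac12)>0$ for $k\equiv 3\pmod 4$.
\end{itemize}
Take $\tfrac12\in R\subseteq\mathcal R[t,\toi]$ and even $e_k=0$. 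Then the bound \eqref{eq:bound_1} diverges along that subsequence once $\toi-t>\pi/c$. For $c>\pi$ this falls inside your fallback range $|t-\toi|\le 1$. In this example $\volf(\cdot\,;\toi)$ itself has complex poles at distance of order $1/c$ from $\toi$. So the infinite radius of convergence claimed in Theorem~\ref{thm:analytic}, which you also invoke for $r^k(t)\to 0$, fails too.

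No reading of the hypothesis on $e_k$ rescues the statement. A correct version must work with a factorial bound $\delta_k\le C'(k+1)!\,C^{k+1}$, which nested Cauchy estimates supply for the exact supremum. It can then only conclude convergence for $|t-\toi|<1/C$. That is a local result, which is what the flowpipe chaining in Algorithm~\ref{alg:tamed_ts} is built to exploit.
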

\begin{proof}
    The statement is a straight-forward consequence of Theorem \ref{thm:analytic}, with the caveat that the bounding procedure does not increase in conservativeness quicker than the factorial.
\end{proof}
\noindent
Corollary \ref{cor:bound_converge} explicitly outlines the relationship between the bounding procedure and upper-bound convergence. Basically, it is important to verify that for higher-order expansions, the bound must not become too conservative.

The bound in Theorem \ref{thm:bound_1} can be further refined for situations where $\delta_k > 0$. 
Theorem \ref{thm:bound_1} uses $\vol{\mathcal R[t, \toi]}$ as a safe (but potentially conservative) bound over the largest possible volume $\cmass$ can attain throughout the interval $[t, \toi]$. This estimate can be improved by recognizing that $\volf(t; \toi)$ \textit{itself} is estimating that same quantity. Using Theorem \ref{thm:bound_1} to obtain an upper bound $V_{max} \geq \volf(t; \toi)$ over the interval $[t, \toi]$ can be used in place of $\vol{\mathcal R[t, \toi]}$ to achieve a tighter bound on the remainder. This process can be repeated an arbitrary number of times, generating a convergent sequence of bounds. 
Formally, let $\{V_{max,i}\}_{i\geq 0}$ be an infinite sequence such that $V_{max, 0} = \vol{\mathcal R[t, \toi]}$ and 
\begin{equation} \label{eq:vmax_recur}
    V_{max, i+1} = \sup_{\xi \in [t, \toi]} \Big( \volfapprox(\xi; \toi) + \delta_k V_{max, i} \frac{|\xi-\toi|^{k+1}}{(k+1)!} \Big).
\end{equation}
Under certain conditions on $\volfapprox(t; \toi)$, we show that $V_{max, i}$ is a convergent geometric series, with a known closed-form limit-value.

\begin{theorem} \label{thm:bound_2}
    Let 
    $\kappa \geq \volfapprox(\xi)$ for all $\xi \in [t, \toi]$. Then, for $0\leq \delta_k \leq \frac{(k+1)!}{|t - \toi|^{k+1}}$, 
    \begin{equation} \label{eq:bound_2}
        \sup_{\xi \in [t, \toi]} \volf(t; \toi) \leq \kappa \Big(1-\delta_k \frac{|t-\toi|^{k+1}}{(k+1)!} \Big)^{-1}
    \end{equation}
    Moreover,
    \begin{equation} \label{eq:volf_bound_2}
        \volf(t; \toi) \leq \volfapprox(t, \toi) + \kappa \Big(\frac{(k+1)!}{|t-\toi|^{k+1}} - \delta_k \Big)^{-1}.
    \end{equation}
\end{theorem}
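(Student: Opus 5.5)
The plan is to analyze the recursion \eqref{eq:vmax_recur} as an affine contraction and pass to its fixed point. Write
\[
\alpha = \delta_k \frac{|t-\toi|^{k+1}}{(k+1)!},
\]
so the hypothesis $0 \leq \delta_k \leq (k+1)!/|t-\toi|^{k+1}$ gives $0 \leq \alpha \leq 1$. The case $\alpha = 1$ makes the right-hand sides infinite and hence trivial, so I will assume $\alpha < 1$.

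\textbf{Step 1 (upper bound on the sup).} The key observation is that Theorem~\ref{thm:bound_1} holds with $\vol{\mathcal R[t,\toi]}$ replaced by any $V \geq \sup_{\xi\in[t,\toi]}\volf(\xi;\toi)$. Indeed, in its proof the only use of $\vol{\mathcal R[t,\toi]}$ is to bound $\vol{\cmass^\star} = \sup_{\xi}\volf(\xi;\toi)$. The sign assumption $\delta_k \geq 0$ is needed here: it lets us multiply the pointwise bound on $(-1)^{k+1}\Gamma_{k+1}$ by a larger volume and still have an upper bound.

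For every $\xi \in [t,\toi]$ we have $|\xi-\toi| \leq |t-\toi|$. Therefore
\[
\volf(\xi;\toi) \leq \volfapprox(\xi;\toi) + \delta_k V \frac{|\xi-\toi|^{k+1}}{(k+1)!} \leq \kappa + \alpha V .
\]
Taking the supremum over $\xi$ shows the map $T(V) = \kappa + \alpha V$ sends valid upper bounds on $S := \sup_{\xi}\volf(\xi;\toi)$ to valid upper bounds. It is also monotone and affine, so the sequence in \eqref{eq:vmax_recur} is dominated termwise by the iterates $T^i(V_{max,0})$.

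\textbf{Step 2 (geometric series).} Iterating from $V_{max,0} = \vol{\mathcal R[t,\toi]}$ gives
\[
T^i(V_0) = \kappa \sum_{j<i} \alpha^j + \alpha^i V_0 .
\]
Since $0 \leq \alpha < 1$ and $V_0 \leq 1$ is finite, this converges to $\kappa/(1-\alpha)$. Each iterate is an upper bound on $S$, so the limit is too, which is exactly \eqref{eq:bound_2}.

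An alternative that avoids iteration is to plug $V = S$ directly (it is finite, since $S \leq 1$). This gives $S \leq \kappa + \alpha S$, hence $S(1-\alpha) \leq \kappa$. I would present this one-line version and mention the iteration as its interpretation.

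\textbf{Step 3 (second claim).} Apply the Step 1 bound at the single point $t$, with $V = \kappa/(1-\alpha)$:
\[
\volf(t;\toi) \leq \volfapprox(t;\toi) + \alpha \frac{\kappa}{1-\alpha}.
\]
Rewriting
\[
\frac{\alpha}{1-\alpha} = \frac{\delta_k}{\frac{(k+1)!}{|t-\toi|^{k+1}} - \delta_k}
\]
recovers the form of \eqref{eq:volf_bound_2}. The stated version lacks the $\delta_k$ factor in the numerator, so I will check the algebra against the statement and, if needed, note that the inequality holds as written when $\delta_k \leq 1$, or else carry the factor $\delta_k$ explicitly.

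\textbf{Main obstacle.} The delicate point is justifying Step 1: that the remainder bound of Theorem~\ref{thm:bound_1} may use $\sup_\xi \volf$ instead of $\vol{\mathcal R}$. This relies on $\delta_k \geq 0$ and on the Lagrange-form remainder involving $\int_{\cmass(\xi)}\Gamma_{k+1}$ evaluated at intermediate times $\xi \in [t,\toi]$ only.
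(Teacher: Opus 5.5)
Your proposal is correct and follows essentially the same route as the paper. The paper bounds the recursion by $V_{max,i+1} \leq \kappa + \alpha V_{max,i}$ using $|\xi-\toi| \leq |t-\toi|$ and $\delta_k \geq 0$, sums the geometric series to $\kappa/(1-\alpha)$, and substitutes this limit for $\vol{\mathcal R[t,\toi]}$ in Theorem~\ref{thm:bound_1}. Your direct fixed-point shortcut $S \leq \kappa + \alpha S$ is valid because $S \leq 1$, and keeping the $\alpha^i V_{max,0}$ term, which the paper drops, makes the iteration exact.

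Your algebra check is also right. That final substitution yields $\volfapprox(t;\toi) + \kappa\,\delta_k\big(\frac{(k+1)!}{|t-\toi|^{k+1}} - \delta_k\big)^{-1}$, so \eqref{eq:volf_bound_2} as stated is missing the factor $\delta_k$ and is implied by the argument only when $\delta_k \leq 1$.
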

\begin{proof}
    Following \eqref{eq:bound_1} in Lemma \ref{thm:bound_1}, we can define the $i$-th remainder bound function as
    \begin{equation}
        \bar{r}^k_i(t) = \delta_k V_{max, i}\frac{|t-\toi|^{k+1}}{(k+1)!}.
    \end{equation}
    For $\delta_k \geq 0$, each $\bar{r}^k_i(t)$ is monotonically decreasing on $[t, \toi]$, and thus, is maximized at $t$. Therefore \eqref{eq:vmax_recur} simplifies to
    \begin{align}
        V_{max, i+1} &\leq sup_{\xi \in [t, \toi]} \Big(\kappa + \delta_k V_{max, i}\frac{|\xi-\toi|^{k+1}}{(k+1)!} \Big) \notag \\
        &= \kappa + \delta_k V_{max, i}\frac{|t-\toi|^{k+1}}{(k+1)!}.
    \end{align}
    Let $\alpha = \delta_k \frac{|t-\toi|^{k+1}}{(k+1)!}$. Then $V_{max, i+1} = \kappa + \alpha V_{max, i}$, which, in non-recursive form, generates the series
    \begin{equation}
        V_{max, i} = \sum_{j=0} \kappa \alpha^j
    \end{equation}
    which, for $\alpha \in [0, 1)$ converges to the RHS of \eqref{eq:bound_2} as $i\rightarrow \infty$. The bound \eqref{eq:volf_bound_2} follows from replacing $\vol{\mathcal R[t, \toi]}$ in \eqref{eq:bound_1} with the converged bound in \eqref{eq:bound_2}.
\end{proof}

While the bound in Theorem \ref{thm:bound_2} appears to always dominate Theorem \ref{thm:bound_1}, it requires a small overhead in the computation of 
$\kappa$, i.e., a bound over a univariate polynomial\footnote{There are many ways to compute the bound over a univariate polynomial. For our experiments, we used a Bernstein-basis conversion (see \eqref{eq:bern_bounds}).}. Additionally, if the estimate expansion $\volfapprox$ decreases substantially over $[t, \toi]$, $\kappa$ may be large enough such that Theorem \ref{thm:bound_2} is more conservative than Theorem \ref{thm:bound_1}. However, in practice, the remainder bound often grows much faster than $\volfapprox$, and $\volfapprox$ generally does not vary much for small $\toi - t$, making Theorem \ref{thm:bound_2} less conservative.

%The key insight revolves around the circularity of computing an upper-bound on the volume of $\cmass(t)$ that \textit{itself} depends on the maximum encountered volume of $\cmass(t)$ on the interval $[t, \toi]$. In the proof of Lemma \ref{lem:bound_1}, $\vol{\mathcal R[t, \toi]}$ is used as a safe upper bound for this volume. However, since the bound on $r^k(t)$ can be used to compute a bound on the true volume $\volf{t; \toi}$ over $[t, \toi]$, the 

\section{Choosing a Regressor} \label{sec:model}

We now focus on the model learning piece of Problem~1.\ref{prob:learn}.
In order to choose a parameterized regressor $\hat{f}$, we first recall the properties that $\hat{f}$ must possess for asymptotic universal convergence and uncertainty propagation. 
Let each $\hat{f}_i$ belong to a parameterized class of functions $\mathcal F_\theta \subset \conts{\uspace^n}$. Then, $\hat{f}$ must satisfy the following properties:
\begin{enumerate}[label=(P\arabic*)]
    \item each $\hat{f}_i$ is a convergent universal estimator (Def. \ref{def:convergent}), \label{cond:cue}
    \item boundary conditions \eqref{eq:boundary_conds} can be exactly enforced, \label{cond:bcs}
    \item each $\hat{f}_i$ satisfies Theorem \ref{thm:analytic}, \label{cond:analytic}
    \item $\Gamma_k$ has an analytical antiderivative for all $k > 0$, and \label{cond:antideriv}
    \item $\Gamma_k$ can be upper and lower bounded such that \eqref{eq:bound_converge} holds. \label{cond:bound}
\end{enumerate}
Each property poses a unique restriction over $\mathcal F_\theta$.
In this work, we study the Bernstein polynomial regressor and show that it satisfies conditions \ref{cond:cue}-\ref{cond:bound}.

A (multivariate) Bernstein polynomial $b(\pu) \in \conts{\uspace^n}$ of degree $\mathbf{d}=(d_1, \ldots, d_n)$ is defined as
\begin{equation}
    b(\pu) = \sum_{\mathbf{0} \leq \mathbf{j} \leq \mathbf{d}} \theta_{\mathbf{j}} \prod_{i=1}^n \beta_i(u_i; \mathbf{j}, \mathbf{d}),
\end{equation}
where $\mathbf{j}=(j_1, \ldots, j_n)$ is a multi-index, and
\begin{equation}
    \beta_i(u_i; \mathbf{j}, \mathbf{d}) = \binom{j_i}{d_i} u_i^{d_i} (1-u_i)^{d_i - j_i}
\end{equation}
are the Bernstein basis functions. Bernstein polynomials describe a polynomial basis, i.e., every (maximal) degree $\mathbf{d}$ polynomial can be represented as a Bernstein polynomial and visa-versa. Each component $\hat{f}_i$ is learned as an independent multivariate Bernstein polynomial.
Below, we show how the Bernstein polynomial regressor satisfies each of the aforementioned conditions.

\paragraph{Condition \ref{cond:cue} (Convergent Universal Estimator)}
The Weierstrass approximation theorem \cite{stone1948generalized} states that (Bernstein) polynomials are universal approximators. In fact, Bernstein polynomials are known to have strong convergence with respect to to the $\mathcal C^k$-norm \cite{veretennikov2016partial}, as required by Def. \ref{def:convergent}.
Using the Bernstein polynomial basis as polynomial features in linear regression ensures that the loss landscape is convex. Hence, as the number of features increase, i.e., $\mathbf{d}$ increases, the global minimizer (in the least-squares sense) is both computable and convergent to the true system.

\paragraph{Condition \ref{cond:bcs} (Boundary Conditions)}
Recall, for $\hat{f}$ to be a valid $\pu$-space vector field, the perpendicular components of $\hat{f}$ along the boundary of $[0,1]^n$ must be zero. Bernstein polynomials offer a simple means of enforcing such boundary conditions. Specifically, for a multi-index $\mathbf{j} = (j_1, \ldots, j_n)$, if $\theta_{\mathbf{j}} = 0$ when $j_i=0$ or $j_i=d_i$, then $\hat{f}_i(\pu) = 0$ when $u_i=0$ or $u_i=1$. 

\paragraph{Condition \ref{cond:analytic} (Analytic Volume Function)} Polynomials are known to be \textit{entire}, i.e., the analytic extension to the complex plane is holomorphic for all $\mathbb{C}^n$. Therefore, the conditions in Theorem \ref{thm:analytic} are easily satisfied, and the Bernstein polynomial regressor generates an analytic volume function.

\paragraph{Condition \ref{cond:antideriv} (Closed-form Antiderivative)} Polynomials always possess an easy-to-compute anti-derivative. Expressed in the Bernstein basis, the antiderivative can be computed as a linear combination of the coefficients, or the incomplete beta function \cite{lorentz2012bernstein}.

\paragraph{Condition \ref{cond:bound} (Bound Convergence)} Bernstein polynomials offer an efficient and tight bounding procedure. Specifically, any Bernstein polynomial $b(\pu)$ is bounded by its coefficients, i.e.
\begin{equation} \label{eq:bern_bounds}
    \min_{\mathbf{0} \leq \mathbf{j} \leq \mathbf{d}} \theta_{\mathbf{j}} \leq \min_{\pu \in \uspace^n} b(\pu) \leq \max_{\pu \in \uspace^n} b(\pu) \leq \max_{\mathbf{0} \leq \mathbf{j} \leq \mathbf{d}} \theta_{\mathbf{j}}.
\end{equation}

Additionally, \cite{garloff1985convergent} outlines procedures for generating arbitrarily tight bounds, thus satisfying Corollary \ref{cor:bound_converge}. In practice, however, such bounding procedures are often not needed and \eqref{eq:bern_bounds} is sufficient. 

In addition to satisfying the aforementioned properties, algebraic operations (such as multiplication, differentiation, and integration used in Lemma \ref{lem:v_derivs}) are relatively numerically stable, and can be performed in the Bernstein basis \cite{murty1985some, lorentz2012bernstein}. Note that the aforementioned criteria are not solely restricted to Bernstein polynomials, and rather outline the dimensions along which the efficacy of the proposed method is effective for a given regressor. Investigating other regressors and their adherence to the criteria is a valuable direction for future work.

%\begin{figure}[!ht]
%    \centering
%    \includegraphics[width=0.5\textwidth, trim={380 250 80 50}, clip]{figures/transport_poly_figure.pdf}
%    \caption{Example ``tamed'' expansion: the expansion derivatives computed from $l=1$ flowpipe region are formally adjusted by $\Delta$, then compared to the derivatives predicted by the $l=0$ expansion.}
%    \label{fig:ex-tame}
%\end{figure}

The theoretical properties of Bernstein polynomials enable a strong asymptotic convergence result for the proposed joint modeling and propagation procedure, which is a straight-forward consequence of the satisfaction of conditions \ref{cond:cue}-\ref{cond:bound}.
\begin{theorem} \label{thm:converge}
    Suppose each $\hat{f}_i$ be a degree-$\mathbf{d}$ Bernstein polynomial estimator subject to the boundary conditions \eqref{eq:boundary_conds}. Let $\bar{P}^k_\toi(\hat{f})$ denote the bound computed using Theorem \ref{thm:bound_1} for a given model $\hat{f}$ and a degree-$k$ expansion. Then, 
    \begin{equation}
        \lim_{\mathbf{d}, |\mathcal{D}|, k \rightarrow \infty} |\bar{P}^k_\toi(\hat{f}) - P(\phi(\px_0, \toi) \in R)|=0,
    \end{equation}
    i.e., $\bar{P}^k_\toi(\hat{f})$ converges to the true system's probability.
\end{theorem}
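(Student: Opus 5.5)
The plan is a triangle inequality that separates the \emph{propagation} error (bound versus the learned model's probability) from the \emph{learning} error (learned model's probability versus the true system's). Write
\begin{equation*}
    |\bar{P}^k_\toi(\hat{f}) - P(\phi(\px_0, \toi) \in R)| \leq |\bar{P}^k_\toi(\hat{f}) - \volf^{\hat f}(0;\toi)| + |P(\hat{\phi}(\px_0,\toi)\in R) - P(\phi(\px_0, \toi) \in R)|,
\end{equation*}
where $\volf^{\hat f}(0;\toi) = P(\hat{\phi}(\px_0,\toi)\in R)$ is the learned model's volume function at $t=0$. We then show that each term vanishes in the appropriate limit.

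For the first term, fix $\hat f$, a Bernstein polynomial obeying \eqref{eq:boundary_conds}. By condition \ref{cond:analytic}, $\hat f$ is entire, so Theorem~\ref{thm:analytic} applies and $\volf$ is analytic with infinite radius of convergence. Hence $\volfapprox^k(0;\toi)\to\volf(0;\toi)$ as $k\to\infty$. We have $\bar P^k_\toi = \volfapprox^k(0;\toi) + \delta_k \vol{\mathcal R[0,\toi]}\toi^{k+1}/(k+1)!$, and $\vol{\mathcal R}\le 1$. It therefore suffices that $\delta_k\toi^{k+1}/(k+1)!\to 0$. Write $\delta_k$ as the true supremum plus $e_k$. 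The supremum is at most $s^{k+1}$ by the Cauchy-estimate recursion in the proof of Theorem~\ref{thm:analytic}, so that part is sub-factorial. The gap $e_k$ is controlled by condition \ref{cond:bound} via Corollary~\ref{cor:bound_converge}, using the coefficient bounds \eqref{eq:bern_bounds} refined as in \cite{garloff1985convergent}. We also need this to hold uniformly enough in $\hat f$ to take the joint limit. One can fix a ball in $\mathcal C^k$ around $f$'s $\pu$-space representation, which gives a uniform $s$; or one can take $k\to\infty$ first (iterated limits), which is what I would do if uniformity proves troublesome.

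For the second term, condition \ref{cond:cue} (Weierstrass, plus $\mathcal C^k$ convergence of Bernstein approximants and convexity of least squares) gives $\hat f\to f$ uniformly on the domain as $\mathbf d,|\mathcal D|\to\infty$, where $f$ here denotes the $\pu$-space field. A Gr\"onwall estimate using the Lipschitz constant of $f$ then gives $\sup_{\pu_0}|\hat\phi(\pu_0,\toi)-\phi(\pu_0,\toi)|\le \|\hat f - f\|_\infty\,\toi e^{L\toi}\to 0$. Convergence in probability of the end states follows. Since $R$ is a hyperrectangle, $P(\hat\phi\in R)\to P(\phi\in R)$ provided the true time-$\toi$ distribution puts zero mass on $\partial R$. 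This holds because the propagated density from \eqref{eq:prop} is absolutely continuous and $\partial R$ has Lebesgue measure zero.

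The main obstacle is the first term: it requires the sub-factorial bound on $\delta_k$ to survive the joint limit in $\mathbf d$. The Bernstein coefficient bound \eqref{eq:bern_bounds} becomes tighter as the degree grows, but $\Gamma_{k+1}$ has degree growing like $(k+1)\mathbf d$. So $e_k$ must be controlled for high-degree polynomials, and the constant $s$ depends on $\sup|\hat g|$ on the polydisk. That supremum is bounded only if the complex extensions of $\hat f$ stay bounded, which $\mathcal C^k$ convergence on the real cube does not automatically give. I would first try to restrict to true $f$ whose complex continuation is holomorphic on a polydisk of radius $>1$ and use approximation results on that polydisk to get a uniform $s$. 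Failing that, I would state the result as the iterated limit $k\to\infty$ before $\mathbf d,|\mathcal D|\to\infty$.
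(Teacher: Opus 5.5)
Your split into a propagation term plus a learning term is the argument the paper intends; it gives no proof beyond saying the result follows from (P1)--(P5). Your learning term is essentially fine. You were right to worry about the first term, but the problem is more basic than uniformity in $\mathbf d$.

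\textbf{The gap.} Both ingredients of your first term are invalid: $\volfapprox^k(0;\toi)\to\volf(0;\toi)$, taken from Theorem~\ref{thm:analytic}, and $\sup_{\mathcal R[0,\toi]}|\Gamma_{k+1}|\le s^{k+1}$, taken from its proof. Both rest on a misuse of Cauchy's estimate.

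At a point $\pw$ of a polydisk $B$ of radius $r$, Cauchy's inequality bounds $|\partial g/\partial w_i(\pw)|$ by $\sup_B|g|$ divided by the distance $r-|w_i-w_{0,i}|$ to the boundary, not by $|w_i-w_{0,i}|^{-1}$. So it can never bound $\sup_B|\partial_i g|$ by $\sup_B|g|$ on the \emph{same} polydisk, which is exactly what the recursion $A_k\le sA_{k-1}$ does. Done correctly, take $k$ nested polydisks shrinking from radius $r$ to $\rho$ in steps of $(r-\rho)/k$, and set $M=\sup_B\max_i|\hat g_i|$. One then gets only $\sup_{B(\rho)}|\Gamma_k|\le\big(nMk/(r-\rho)\big)^k$, which is factorial growth. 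Hence $\delta_k\toi^{k+1}/(k+1)!\to0$ is guaranteed only for $\toi$ below a threshold of order $(r-\rho)/(nM)$. Separately, the hypothesis $e_k/(k+1)!\to0$ of Corollary~\ref{cor:bound_converge} does not give the $e_k\toi^{k+1}/(k+1)!\to0$ you need once $\toi>1$.

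\textbf{A counterexample.} This is not just a weak estimate: Theorem~\ref{thm:analytic} is false, and the claimed limit fails for large $\toi$. Take $n=1$ and $\hat f(u)=u(1-u)$. It is entire, and its Bernstein coefficients are $(0,\tfrac12,0)$, so the boundary conditions hold. Let $\sigma(x)=1/(1+e^{-x})$ and $\ell=\sigma^{-1}$; the flow is $\hat\phi(u,s)=\sigma(\ell(u)+s)$. For $R=[a,b]$ this gives $\volf(t;\toi)=\sigma(t-\toi+\ell(b))-\sigma(t-\toi+\ell(a))$, which has simple poles at $t=\toi-\ell(c)\pm i\pi$ for $c\in\{a,b\}$. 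Its Taylor series about $\toi$ therefore has finite radius $\min_{c}\sqrt{\ell(c)^2+\pi^2}$.

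For larger $\toi$, the partial sums $\volfapprox^k(0;\toi)$ oscillate with unbounded amplitude. Meanwhile the remainder term in \eqref{eq:bound_1} is nonnegative for large $k$, because $\Gamma_{k+1}$ changes sign on the tube. So $\bar P^k_\toi$ is unbounded. Consistently, $\Gamma_{k+1}(u)=\sigma^{(k+2)}(\ell(u))/\sigma'(\ell(u))$ reaches size of order $(k+2)!/\pi^{k+3}$ near $u=\tfrac12$.

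This $\hat f$ is exactly the $\pu$-space field of an admissible true system: $\dot x=\Phi(x)(1-\Phi(x))/\varphi(x)$ with $x_0\sim\mathcal N(0,1)$, where $\Phi,\varphi$ are the standard normal cdf and pdf. That field is analytic and globally Lipschitz. The least-squares fit therefore reproduces it for every degree $d\ge2$, so neither the joint limit nor your fallback iterated limit rescues the first term. The statement needs an extra hypothesis, for example $\toi$ below a radius of time-analyticity of the volume functions that is uniform along the sequence of models.

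\textbf{A smaller point.} Carry out your Gr\"onwall step in $\px$-space, or on compact subsets of the open cube. The $\pu$-space field is typically not Lipschitz up to the boundary: for $\dot x=-x$ one gets $f'(u)=\Phi^{-1}(u)^2-1$.
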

%\begin{proof}
%    
%\end{proof}
\noindent
Theorem \ref{thm:converge} illuminates the convergence properties of the proposed uncertainty propagation method when using a Bernstein polynomial regressor.

\section{Backward Set Propagation}
Sections \ref{sec:prop} and \ref{sec:model} describe a means of generating asymptotically convergent bounds on the model's predictions. However, in practice, truncated Taylor approximations often only provide accurate estimates within a local neighborhood of the expansion point, even if the underlying function has infinite radius of convergence. To address this problem, we can leverage reachable set propagation \cite{chen2013flow, bogomolov2019juliareach, althoff2015introduction} to maintain a crude, mathematically simple over-approximation of the true shape of $\cmass(t)$. The reachable set propagation generates a flowpipe, i.e., a set of transition regions that are guaranteed to contain $\cmass(t)$ over a finite number of time intervals. Referencing Fig. \ref{fig:ex-deformation}, we can compute a flowpipe that starts at $t=\toi$, and over-approximates $\cmass(t)$ as it travels backwards in time to $t=0$. 

Formally, a flowpipe is a sequence of tube regions $S=\{\mathcal R_l[t_l, t_{l+1}]\}$ where $l \in \{0, \ldots, L\}$ such that each $\mathcal R_l$ satisfies \eqref{eq:tube_region}, $t_{l=0}=\toi$, and $t'_L=0$. Additionally, the flowpipe theory also provides point-wise set over approximations at specific instances $t$, denoted $\mathcal R[t] \supseteq \cmass(t)$.
We assume that each $\mathcal R_l$ is hyperrectangular. Since it is trivial to calculate the volume of each $\mathcal R_l$, the flowpipe itself can be used to calculate a very conservative $\bar{P}_\toi$. For simplicity, let $\volfbound$ denote the approximate expansion plus the bound monomial, making it an upper bound on $\volf$.

We combine the computational practicality of a box-flowpipe with the local accuracy of the Taylor volume function expansion. For each flowpipe box, we can build a Taylor expansion $\volfbound^l(t; t_l)$ centered about $t_l$ for estimating the volume on the interval $[t_l, t_{l+1}]$. Each Taylor expansion is created by integrating over the hyperrectangular over-approximation $\mathcal R_l[t_l]$. This alone, does not meaningfully reduce the conservativeness compared to just using the volume of each $\mathcal R_l$. However, we can use the computed coefficients of each $\volfbound^{l+1}(t; t_l)$ to ``tame'' the diverging derivatives of the previous expansion $\volfbound^{l}(t; t_l)$.

Consider (without loss of generality) the expansions of the first and second flowpipe regions: $\volfbound^0(t; t_0=\toi)$ constructed by integrating over $\mathcal R_0[t_0] = R$ and $\volfbound^1(t; t_1)$ constructed by integrating over $\mathcal R_1[t_1]$. 
%A visual example is shown in Fig. \ref{fig:ex-tame}. 
At time $t=t_1$, the derivatives of the previous expansion upper-bound all $k$ time derivatives, i.e.
\begin{align}
    \frac{d^k \volf}{dt^k}\Big|_{t = t_1} &\leq \frac{d^k \volfbound^0}{dt^k}\Big|_{t = t_1}. \label{eq:prev_estimate}
\end{align}
On the other-hand, we can construct an alternative (upper-bound) estimate using the integral of $\Gamma_k$ over the over-approximate rectangular region $R_1[t_1]$. To see how, consider the decomposition $\Gamma_k = \Gamma_k^+ + \Gamma_k^-$ where $\Gamma_k^+ \geq 0$, $\Gamma_k^- \leq 0$, such that $\Gamma_k^+ \geq \Gamma_k$ and $\Gamma_k^- \leq \Gamma_k$. Since $\Gamma_k$ is a Bernstein polynomial, $\Gamma_k^+$ can be created by setting all negative coefficients of $\Gamma_k$ to zero. We aim to upper bound $\int_{\cmass(t_1)} \Gamma_k(\pu) d\pu$, however, at $t_1$ (or any future time $t_l > t_0$), we do not know exactly where $\cmass(t_1)$ is, only that it must be contained in $R_1[t_1]$. Therefore, $\int_{\cmass(t_1)} \Gamma_k(\pu) d\pu$ is upper-bounded for \textit{any} possible $\cmass(t_1) \subseteq R_1[t_1]$ as follows
\begin{align}
    \frac{d^k \volf}{dt^k}\Big|_{t = t_1} = \int_{\cmass(t_1)} \Gamma_k(\pu) d\pu =& \int_{\cmass(t_1)} \Gamma^+_k(\pu) + \Gamma^-_k(\pu) d\pu \notag \\
    \leq& \int_{\cmass(t_1)} \Gamma^+_k(\pu) d\pu \notag \\
    \leq& \int_{R_1[t_1]} \Gamma^+_k(\pu) d\pu. \label{eq:curr_estimate}
\end{align}
%On the other hand, we can construct an alternative (upper-bound) estimate by integrating the non-negative part of $\Gamma_k$ (denoted $\Gamma^+_k$) over $R_1[t_1]$, effectively upper-bounding \eqref{eq:integ} for any $\cmass(t=t_1)\subset R_1[t_1]$.
%On the other hand, constructing $\volfapprox^1(t; t_1)$ by integrating each $\Gamma_k$ over $R_1[t_1]$ (using \eqref{eq:integ}) yields an alternative set of ``estimates'' of the current time derivatives. Unfortunately, the derivatives are not directly comparable because they may not correspond to the same region of integration. 
The bound in \eqref{eq:prev_estimate} will be tighter if the previous local Taylor expansion is accurate, whereas \eqref{eq:curr_estimate} will be tighter if the reachable set approximation is more accurate (i.e. the previous Taylor series is diverging). Both bounds are sound over-approximations of the true derivatives, therefore, the smaller one is selected for each coefficient of the expansion of $\volfapprox^0(t; t_1)$. This procedure is iterated over each set in the flowpipe, as described in the following Algorithm.

\begin{figure*}[!ht]
    \centering

    %========================
    % Row a: (a.1), (a.2)
    %========================
    \setcounter{subfigure}{0}%
    \renewcommand\thesubfigure{a.\arabic{subfigure}}%

    \makebox[0pt][r]{\rotatebox{90}{\qquad \qquad \qquad \qquad Probability}\hspace{-2.5em}}
    \begin{subfigure}{0.48\textwidth}
        \centering
        \includegraphics[width=0.8\textwidth, trim={0 0 0 0}, clip]%
            {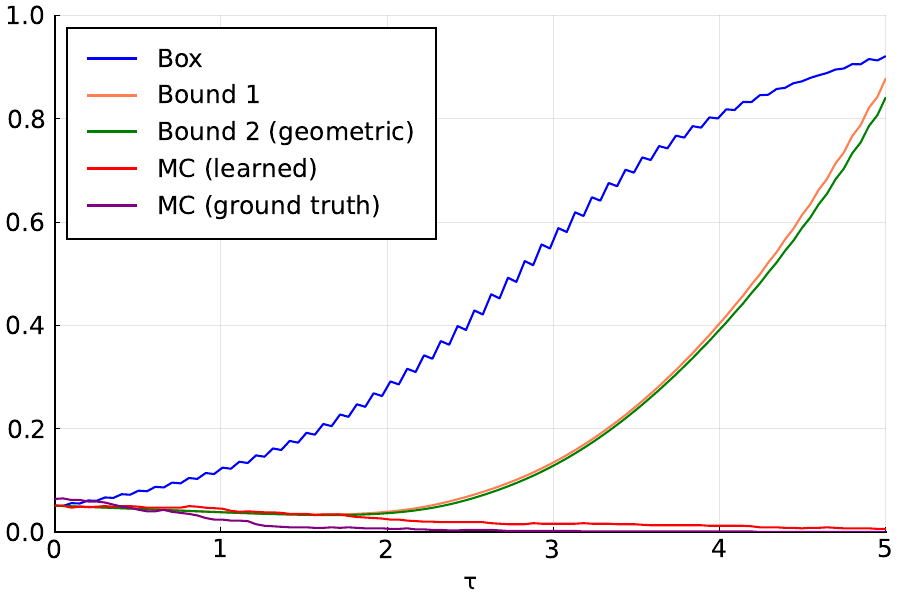}
        \caption{Probability vs.\ $\toi$}
        \label{fig:exp:vanderpol_prob} % -> (a.1)
    \end{subfigure}
    \hfill
    \begin{subfigure}{0.48\textwidth}
        \centering
        \includegraphics[width=0.9\textwidth, trim={0 0 0 0}, clip]%
            {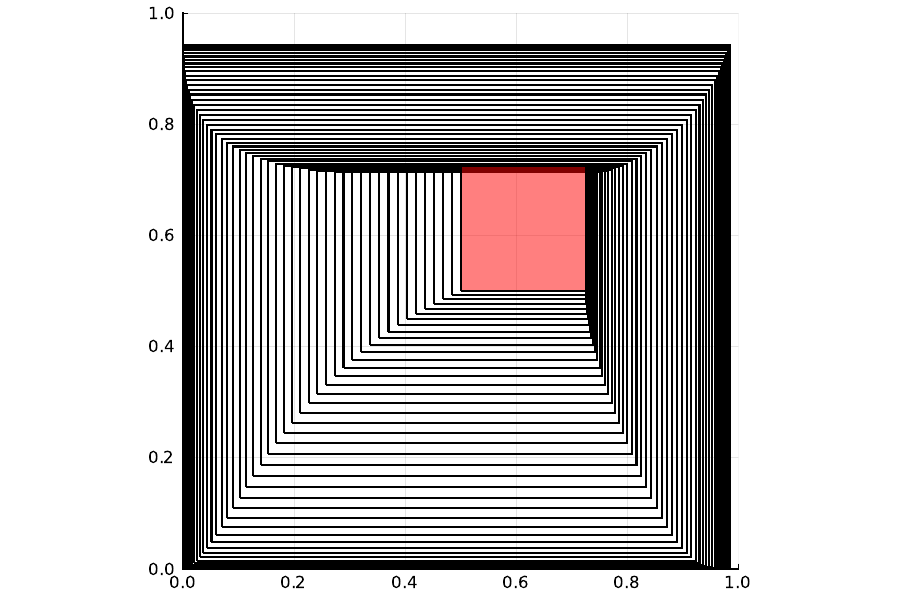}
        \caption{Flowpipe ($\uspace^n$) ($R$ shown in red)}
        \label{fig:exp:vanderpol_fp} % -> (a.2)
    \end{subfigure}

    \vspace{1em}

    %========================
    % Row b: (b.1), (b.2)
    %========================
    \setcounter{subfigure}{0}%
    \renewcommand\thesubfigure{b.\arabic{subfigure}}%

    \makebox[0pt][r]{\rotatebox{90}{\qquad \qquad \qquad \quad Log-probability}\hspace{-2.5em}}
    \begin{subfigure}{0.48\textwidth}
        \centering
        \includegraphics[width=0.8\textwidth, trim={0 0 0 0}, clip]%
            {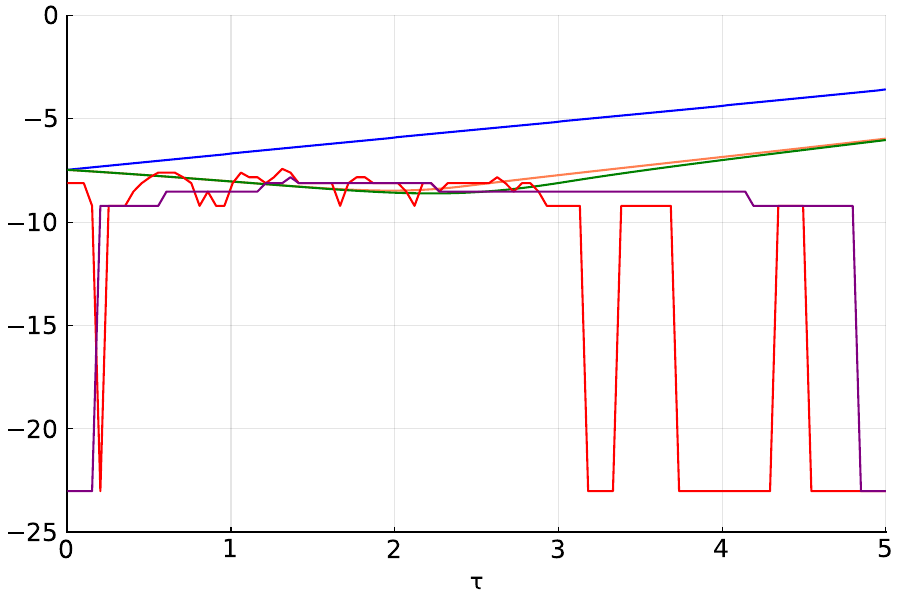}
        \caption{Log-Probability vs.\ $\toi$}
        \label{fig:exp:vanderpol_rare_prob} % -> (b.1)
    \end{subfigure}
    \hfill
    \begin{subfigure}{0.48\textwidth}
        \centering
        \includegraphics[width=0.9\textwidth, trim={0 0 0 0}, clip]%
            {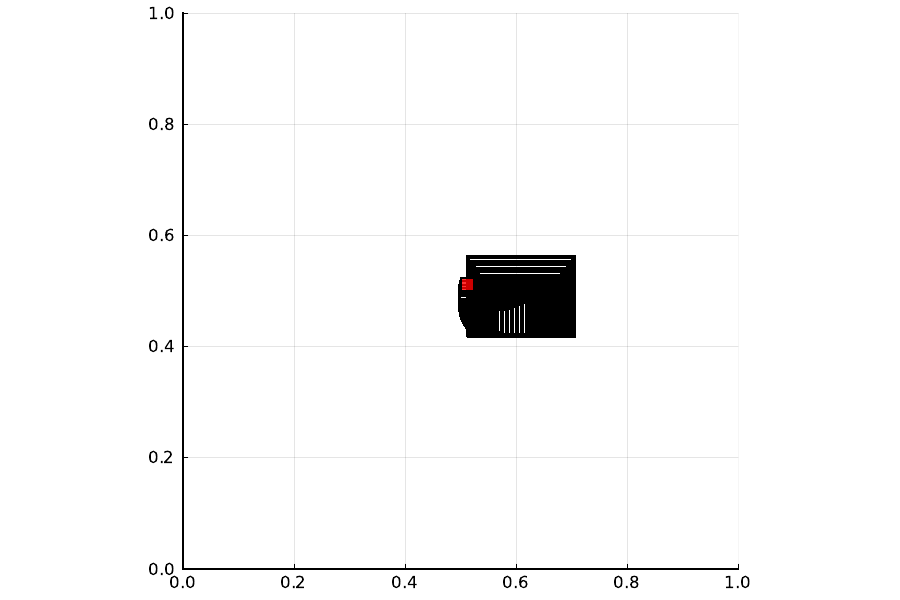}
        \caption{Flowpipe ($\uspace^n$)}
        \label{fig:exp:vanderpol_rare_fp} % -> (b.2)
    \end{subfigure}

    \caption{2D Van der Pol: probability and flowpipe evolution (top, a.\*), and rare-event counterparts (bottom, b.\*).
    }
    \label{fig:exp:vanderpol_all}
\end{figure*}

\begin{algorithm}
\caption{Flowpipe-Tamed Volume Function Expansion}
\label{alg:tamed_ts}
\begin{algorithmic}[1]
    \STATE \textbf{Input:} Flowpipe $S$, Expansion degree $m$
    %\STATE 
    %\STATE Create $\volfapprox^{l}(t; t_{l})$ with coefficients $v_{k, init}$, plus bound monomial
    \FOR{$l = 1$ \TO $L-1$}
        %\STATE Compute derivatives $v_{k, curr} = \frac{d}{dk}\volfapprox^{l}(t; t_{l})\big|_{t=t_l}$
        %\STATE Compute $\volfapprox^l(t; t_l=)$ using integration region $\mathcal R[t_l]$ and bounding region $\mathcal R[t_l, t'_l]$
        \IF{$l = 0$}
            \STATE $v_{k, curr} \gets \int_{\mathcal R} \Gamma_k(\pu)d\pu$ $\forall k \in \{1, \ldots, m\}$
        \ELSE
            \STATE $v_{k, prev} \gets \frac{d}{dk}\volfapprox^{l-1}(t; t_{l-1})\big|_{t=t_l}$ $\forall k \in \{1, \ldots, m\}$
            \STATE $v_{k, new} \gets \int_{\mathcal R[t_l]} \Gamma^+_k(\pu)d\pu$ $\forall k \in \{1, \ldots, m\}$
            \STATE $v_{k, curr} \gets \min(v_{k, new}, v_{k, prev})$ $\forall k \in \{1, \ldots, m\}$
        
        \ENDIF
        \STATE Create bound monomial over $\mathcal R[t_l, t_{l+1}]$
        \STATE Create $\volfbound^{l}(t; t_{l})$ with coefficients $v_{k, curr}$, plus bound monomial
    \ENDFOR
    \STATE \textbf{Output:} $\{\volfbound^{l}(t; t_{l}) \}_{l \in \{1, \ldots, L\}}$
\end{algorithmic}
\end{algorithm}

Algorithm \ref{alg:tamed_ts} outlines the procedure for creating a piece-wise collection of ``tamed'' volume function expansions. The algorithm takes as input a flowpipe $S$ and an expansion degree. 
For the first flowpipe box, i.e. $R$, the expansion is created using the exact integrals (line 4). For each subsequent flowpipe box, the derivative estimates from the previous expansion are calculated using \eqref{eq:prev_estimate} (line 6), and the ``new'' derivative estimates are calculated by integrating over the current flowpipe box using \eqref{eq:curr_estimate} (line 7). The less conservative derivatives are chosen (line 8). 
The chosen derivatives are used in the coefficients of \eqref{eq:taylor}, and a bound monomial (computed using Theorem \ref{thm:bound_1} or Theorem \ref{thm:bound_2}) is computed over the transition region, forming the next expansion (line 10-11).
%and the next expansion is constructed (line 10-11). 
%The algorithm outputs a piece-wise time polynomial.
%%For each flowpipe box, all $m$ derivatives are initially calculated (line 3). If $l \geq 1$, there exists a previous predictor to compare to. 
%The derivatives (evaluated at the new time $t_l$) of the previous predictor are calculated (line 5), and the current predictor's derivatives are worst-case adjusted (line 6). The adjusted derivatives are compared to the previous predictors derivatives, and the smaller one is selected (line 7). 

\section{Experimental Results} \label{sec:experiments}
This section evaluates the practical utility of the proposed modeling and propagation method. 
%We examine two empirical case studies: i) propagation results for learning a 2D and 4D system, ii) ablation studies with regard to the expansion degree.

\paragraph{Experimental Setup}
We performed experiments on two systems, a 2D Van der Pol, and a 4D cartpole. The Van der Pol system is described by 
\begin{align}
    \dot{x}_1 &= x_2 \notag \\
    \dot{x}_2 &= x_2 (1 - x_1^2) - x_1
\end{align}
with initial distribution $\mathcal{N}(\mathbf{0}, 0.5^2 I_{2\times2})$. The cartpole system is described by $\px = [y, v_y, \theta, \omega]$
\begin{align}
    \dot{y} &= v_y \notag \\
    \dot{v}_y &= \frac{m_p \sin(\theta) (l\omega^2 + g \cos(\theta))}{m_c + m_p \sin^2(\theta)} \notag \\
    \dot{\theta} &= \omega \notag \\
    \dot{\omega} &= \frac{m_p l \omega^2 \cos(\theta)\sin(\theta) - (m_c + m_p) g \sin(\theta)} {l(m_c + m_p \sin^2(\theta))}
\end{align}
where $g=9.81, l=1.0, m_p=0.1$, and $m_c=1.0$ and the inital distribution is $\mathcal{N}([0.0, 0.0, 0.1, 0.5], \text{diag}(0.5^2, 0.1^2, 0.2^2, 0.4^2))$.
Three bounding methods are compared against Monte-Carlo forward Euler-integration simulations with 10k samples for both the learned model and the true underlying system (ground truth). The first bounding method (``Box'') builds successive taylor for each flowpipe box. Then, we compare both bounding methods, i.e. Theorem \ref{thm:bound_1} (``Bound 1'') and Theorem \ref{thm:bound_2} (``Bound 2''), using the tamed expansion procedure in Algorithm \ref{alg:tamed_ts}. For the Van der Pol system, the model was regressed as a degree $7$ Bernstein polynomial, and an expansion degree $m=5$ was chosen. For the cartpole system, the model is degree $6$ and $m=4$.

\subsection{Propagation Results}

Figure~\ref{fig:exp:vanderpol_prob} shows the results for a regularly sized $R$. The box method provides the most conservative $\bar{P}_\toi$, since it roughly captures the conservatism in the flowpipe itself, which can be seen in Figure~\ref{fig:exp:vanderpol_fp}. Up until $\toi=2.0$, the tamed methods provide a very tight $\bar{P}$. Beyond $\toi=2.0$ the flowpipe becomes more conservative, the ``taming'' effect diminishes. This means that the flowpipe communicates less informative information when taming the expansion derivatives, and thus the expansion becomes more conservative, eventually catching up to the box method.

Figure \ref{fig:exp:vanderpol_rare_prob} shows the results for a very small $R$, i.e. a \textit{rare-event}. Note that the probabilities in Fig. \ref{fig:exp:vanderpol_rare_prob} are shown using a log-scale. Since the event is so rare, very few forward simulations end up in $R$, making the probability estimate highly inaccurate, even for 10k samples. Conversely, the tamed expansion methods are able to provide a very tight and informative bound until $\toi=3.0$. This experiment highlights the power of the proposed method, showing that, unlike sampling-based simulation, rarity of the target event does not impact the predictive performance.

The results for the 4D cartpole system are shown in Fig. \ref{fig:exp:cartpole}. 
The resulting behavior is very similar to that of the 2D Van der Pol system. 
Specifically, there is a period where the tamed method provides a very tight bound, then, as the flowpipe becomes highly conservative, the tamed method becomes less accurate. 
The geometric bound is able to hold a tighter estimate for longer before diverging, illuminating the benefit of using Theorem \ref{thm:bound_2} over Theorem \ref{thm:bound_1}.
%Unlike the Van der Pol experiment, the expansions computed via the box method show a local decreasing trend at each $t_l$, even though the flowpipe still grows (explaining the small jumps). 
%Since the expansion is locally decreasing, that information information is communicated to the tamed methods. 
%This allows the tamed methods to maintain an improved bound over the box method, instead of converging to the box method's bound.
The results for the rare event (Fig. \ref{fig:exp:cartpole_rare}) are consistent with the Van der Pol experiment, yielding very accurate bounds for a significant period of time compared to Monte Carlo simulations.

\begin{figure*}
    \centering
    \makebox[0pt][r]{\rotatebox{90}{\qquad \qquad \qquad \qquad Probability}\hspace{-1.5em}}
    % First (left) subfigure
    \begin{subfigure}{0.48\textwidth}
        \centering
        \includegraphics[width=0.9\textwidth, trim={0 0 0 0}, clip]{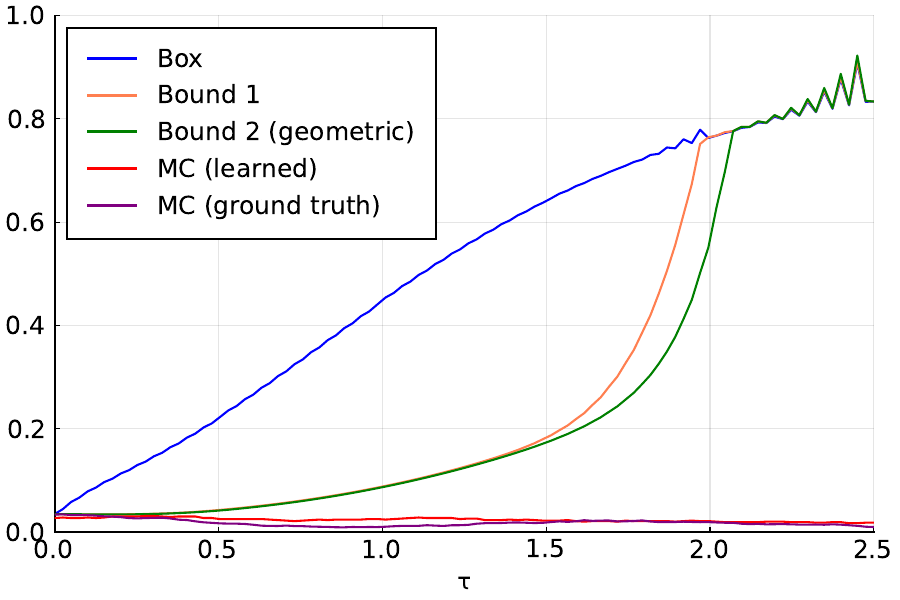}
        \caption{Probability vs. $\toi$}
        \label{fig:exp:cartpole_prob}
    \end{subfigure}
    \hfill
    \makebox[0pt][r]{\rotatebox{90}{\qquad \qquad \qquad \qquad Log-Probability}\hspace{-1.5em}}
    % Second (right) subfigure
    \begin{subfigure}{0.48\textwidth}
        \centering
        \includegraphics[width=0.9\textwidth, trim={0 0 0 0}, clip]{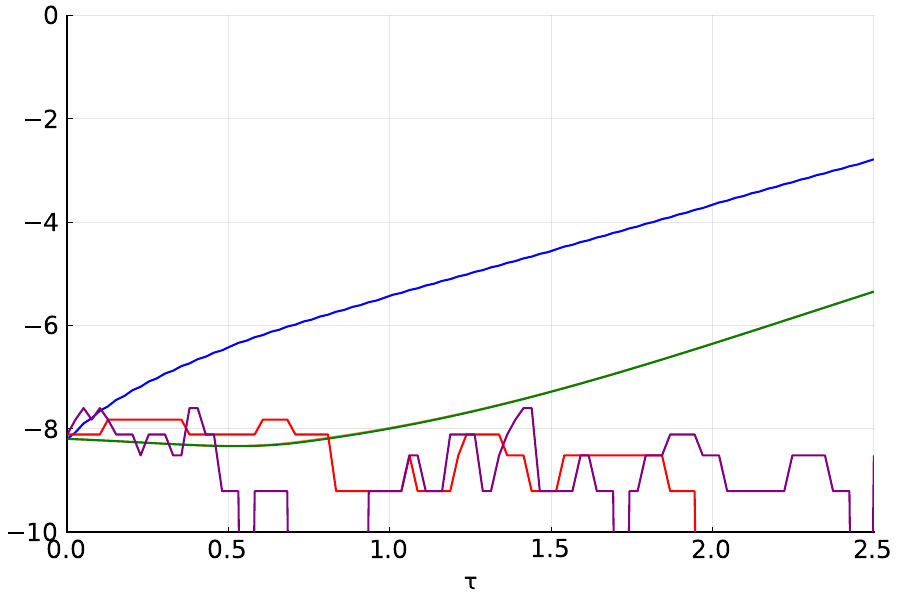}
        \caption{Log-probability vs. $\toi$ (rare event)}
        \label{fig:exp:cartpole_rare}
    \end{subfigure}

    \caption{4D Cartpole}
    \label{fig:exp:cartpole}
\end{figure*}

\begin{remark} \label{rem:shape_fidelity}
    State of the art reachable set tools \cite{chen2013flow, bogomolov2019juliareach, althoff2015introduction} use higher fidelity approximations, e.g., zonotopes, of the shape of the region, or set-boundary propagation methods \cite{xue2016reach}. This significantly reduces the ``wrapping'' effect and makes the flowpipe much less conservative, even if hyperrectangular over-approximations are used in post-process. The reachable set implementation used for our experiments leverages Bernstein polynomial bounds instead of traditional interval-arithmetic bounds, but does not use high fidelity shape approximations. Embedding Bernstein bounds state of the art reachable set tools can significantly increase horizon of accurate predictions.
\end{remark}

\subsection{Expansion Degree Ablation Study}
\begin{figure}
    \centering
    \makebox[0pt][r]{\rotatebox{90}{\qquad \qquad \qquad Probability}\hspace{0.0em}}
    \includegraphics[width=0.9\linewidth]{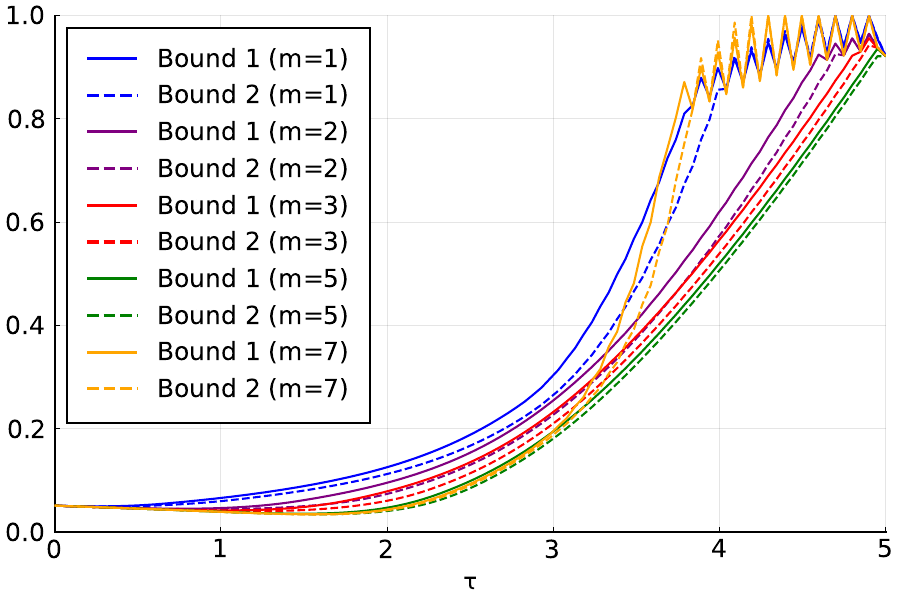}
    \caption{Expansion Degree $m$ Ablation Study}
    \label{fig:k_ablation}
\end{figure}

Figure~\ref{fig:k_ablation} shows an ablation study on the Van der Pol system varying the expansion degree $k$. As expected, increasing $m$ yields a tighter bound. However, when $m$ is large, e.g., $m=7$, the bound-monomials (e.g., \eqref{eq:bound_1}) become higher degree, and thus there is more potential to diverge as time progresses. This is consistent with Taylor series approximation, showing that higher expansion degrees improve local neighborhood estimates, but can cause sharper divergence outside of such neighborhoods. 

Additionally, since the Van der Pol system has chaotic behavior as the states become large in magnitude, flowpipe regions that occupy most of the state space may have sharply decreasing volume functions, but still must grow in order to guarantee containing all of the states. This explains the oscillatory behavior observed when the probability estimates go beyond $0.8$.

\section{Discussion}
The proposed method leverages a special domain transformation that alleviates the need to track high-fidelity estimates of the \textit{shape} of $\cmass(t)$, and rather, only precisely track the volume. Coupling this idea with a crude approximation of the shape (via reachable sets) can yield very tight probability bounds for significant periods of time. This approach generates formal certificates with respect to the learned model, enhancing the understanding and trust of the model's predictions. Additionally, the proposed method improves in performance with higher-fidelity shape approximations (see Remark \ref{rem:shape_fidelity}). Since our method ``post-processes'' a flowpipe, integration with future advancements in reachable set tools and methods is straight-forward.

The proposed method may have applications in uncertainty quantification for rare events. For approximation only (no formal bounds), using Pade approximants \cite{baker1961pade} may yield a significantly more accurate $\volfapprox$, since rational functions do not necessarily diverge like high-degree polynomials. Additionally, discretization methods for enhancing the predicted shape of $\cmass(t)$ is another promising direction of future work.

\paragraph{Limitations}
While the conditions outlined in Section~\ref{sec:model} do not assume a Bernstein polynomial regressor, there may be very few classes of regressors that satisfy all conditions. Since the model is learned from data, there always exists statistical learning \textit{error}; however, this work does not formally quantify this error with finite sample guarantees. For applications where \textit{end-to-end} safety certificates are desired, we recognize this as a valuable future direction. Furthermore, the number of coefficients in a Bernstein polynomial is exponential in the dimension. The memory required for $\Gamma_k$ is $\mathcal{O}((kd)^n)$, making the proposed method memory intensive for high-dimensional systems or large degree expansions.

\section{Conclusion}
This work tackles the joint modeling and uncertainty propagation/evaluation problem for nonlinear continuous time systems subject to initial state uncertainty. The proposed method couples continuum dynamics with Taylor series approximation theory to yield formally guaranteed probability bounds with respect to the learned model, as well as asymptotic convergence to the true system's probability. The theoretical results provide a rigorous checklist for selecting a model such that the mathematical guarantees hold. The chosen Bernstein polynomial regressor, coupled with the proposed propagation approach, shows promising results for generating model-based certificates, particularly for rare-events.

\bibliographystyle{acm}
\bibliography{bibliography}

% \addtolength{\textheight}{-12cm}   % This command serves to balance the column lengths
                                  % on the last page of the document manually. It shortens
                                  % the textheight of the last page by a suitable amount.
                                  % This command does not take effect until the next page
                                  % so it should come on the page before the last. Make
                                  % sure that you do not shorten the textheight too much.

\end{document}